\theoremstyle{plain}
\newtheorem{theorem}{Theorem}[section]
\newtheorem{lemma}[theorem]{Lemma}
\newtheorem{definition}[theorem]{Definition}
\let\oldnl\nl
\newcommand{\nonl}{\renewcommand{\nl}{\let\nl\oldnl}}
\def\BibTeX{{\rm B\kern-.05em{\sc i\kern-.025em b}\kern-.08em
    T\kern-.1667em\lower.7ex\hbox{E}\kern-.125emX}}
\begin{document}

\title{l2Match: Optimization Techniques on Subgraph Matching Algorithm using Label Pair, Neighboring Label Index, and Jump-Redo method\\}

\author{\IEEEauthorblockN{Chi Qin Cheng}
\IEEEauthorblockA{\textit{School of Information Technology} \\
\textit{Monash University}\\
Selangor, Malaysia \\
0000-0003-0718-3981}
\and
\IEEEauthorblockN{Kok Sheik Wong}
\IEEEauthorblockA{\textit{School of Information Technology} \\
\textit{Monash University}\\
Selangor, Malaysia \\
0000-0002-4893-2291}
\and
\IEEEauthorblockN{Lay Ki Soon}
\IEEEauthorblockA{\textit{School of Information Technology} \\
\textit{Monash University}\\
Selangor, Malaysia \\
0000-0002-8072-242X}
}

\maketitle

\begin{abstract}
Graph database is designed to store bidirectional relationships between objects and facilitate the traversal process to extract a subgraph. 
However, the subgraph matching process is an NP-Complete problem. 
Existing solutions to this problem usually employ a filter-and-verification framework and a divide-and-conquer method. 
The filter-and-verification framework minimizes the number of inputs to the verification stage by filtering and pruning invalid candidates as much as possible. 
Meanwhile, subgraph matching is performed on the substructure decomposed from the larger graph to yield partial embedding. 
Subsequently, the recursive traversal or set intersection technique combines the partial embedding into a complete subgraph. 
In this paper, we first present a comprehensive literature review of the state-of-the-art solutions. 
l2Match, a subgraph isomorphism algorithm for small queries utilizing a Label-Pair Index and filtering method, is then proposed and presented as a proof of concept. 
Empirical experimentation shows that l2Match outperforms related state-of-the-art solutions, and the proposed methods optimize the existing algorithms.
\end{abstract}

\begin{IEEEkeywords}
subgraph isomorphism, subgraph matching, information retrieval, communication and information theories
\end{IEEEkeywords}

\section{Introduction}
Graph algorithm is an intensively researched subject. 
Its significance has grown in recent years. 
For instance, the shortest path algorithm is frequently used in the logistics and transportation sectors to cut waste and boost productivity~\cite{shortestlogistic}. 
Additionally, Natural Language Processing groups similar objects using the graph clustering technique to facilitate information retrieval~\cite{NLPgraph}. These are just a few, not all, instances of how the graph algorithm is commonly used. 
In a graph, edges depict the connections between vertices or objects. 
Protein-to-protein and social interactions are two examples of the kinds of information that can be represented with graphs. 
These data cannot be stored in a relational database due to the high level of randomness and uncertainty, as the schema and structure of the data must be identified and defined beforehand.

The Subgraph Matching problem, also known as Subgraph Isomorphism (\textbf{SI}), aims to match isomorphic subgraphs (embeddings) in a larger graph for a specific query graph. 
This procedure is somewhat comparable to matching the occurrences of a string pattern \(p\) in a text file \(X\). 
String matching is however a simpler problem than SI, where SI is NP-Complete and more challenging \cite{NPComplete}.

Recent researches \cite{CFL, CECI, DP-ISO, QuickSI, TurboIso} approach the SI problem with the filter-and-verification framework, which essentially trims the solution space to decrease the time consumption in the verification stage \cite{QuickSI}. 
In particular, CFL-Match\cite{CFL} and CECI\cite{CECI} apply Forward Candidate Generation and Backward Candidate Pruning (\textbf{FCGBCP}) to prune invalid candidates by leveraging the connection properties of a query vertex. 
Additionally, CECI proposes a Compact Embedding Cluster Index (\textbf{CECI}) auxiliary data structure to speed up index lookup during the filtering step. 
When moving forward through the query graph, Forward Candidate Generation (\textbf{FCG}) generates and filters the candidates. 
Backward Candidate Pruning (\textbf{BCP}), in contrast, filters and removes invalid candidates in the other direction. 
Nevertheless, the removal of an invalid candidate for a query vertex is not propagated to its neighboring candidates that are mapped to other query vertices. 
Therefore, the removal of invalid candidates is incomplete, necessitating another expensive refinement step while traversing backwards as observed in the CECI algorithm. 
Furthermore, various studies~\cite{CFL, CECI, DP-ISO} use the Neighboring Label Frequency (NLF) filtering method to compare the frequency of occurrence of each unique label among a vertex's neighbors between a query vertex and a candidate. 
It prunes a candidate if its NLF is lesser than the NLF of a query vertex for any neighboring unique label. 
Although the NLF is precomputed, the FCGBCP has to scrutinize every edge of a candidate to find its neighbors with a particular label. 
The above shortcomings motivate the proposal of l2Match algorithm in this research to optimise the filtering step of the CECI algorithm and reduce the total time consumption of query by: (1) traversing query graph efficiently during filtering step, and (2) avoiding unnecessary edge scanning in NLF filtering, and (3) reducing exploration of the redundant search branches in the enumeration step.

Our research makes following contributions:
\begin{enumerate}
    \item We develop the Label Pair Filtering (LPF) method that generates the set of candidate vertices for a given query vertex. LPF utilizes LPI to obtain a subset of data edges with a specific label pair. As a result, LPF skips the filtering process on the data vertices that are not connected to any neighbors of a specific label.
    \item We propose a Neighboring Label Index that extends the NLF index to omit the label and degree filtering on data edges with a mismatching label pair.
    \item We simplify FCGBCP filtering and traversal by combining the Backward Candidate Pruning (BCP) with the backward refinement step into the BCPRefine method to remove additional traversal on the query graph and avoid expensive invalid candidate removal operations and NLF filtering in the BCP step.
    \item We present the Jump-Redo (JR) technique to reduce the number of search nodes explored and the time taken in the enumeration step.
    \item We demonstrate the effectiveness of the l2Match algorithm that combines the LPI, LPF, Neighboring Label Index, CECI auxiliary indexing data structure, BCPRefine, and JR methods against the state-of-the-art algorithms through empirical evidence.
\end{enumerate}

\subsection{Preliminaries}

\begin{figure}
    \begin{subfigure}[t]{0.5\textwidth}
        \centering
        \includegraphics[width=0.2\textwidth]{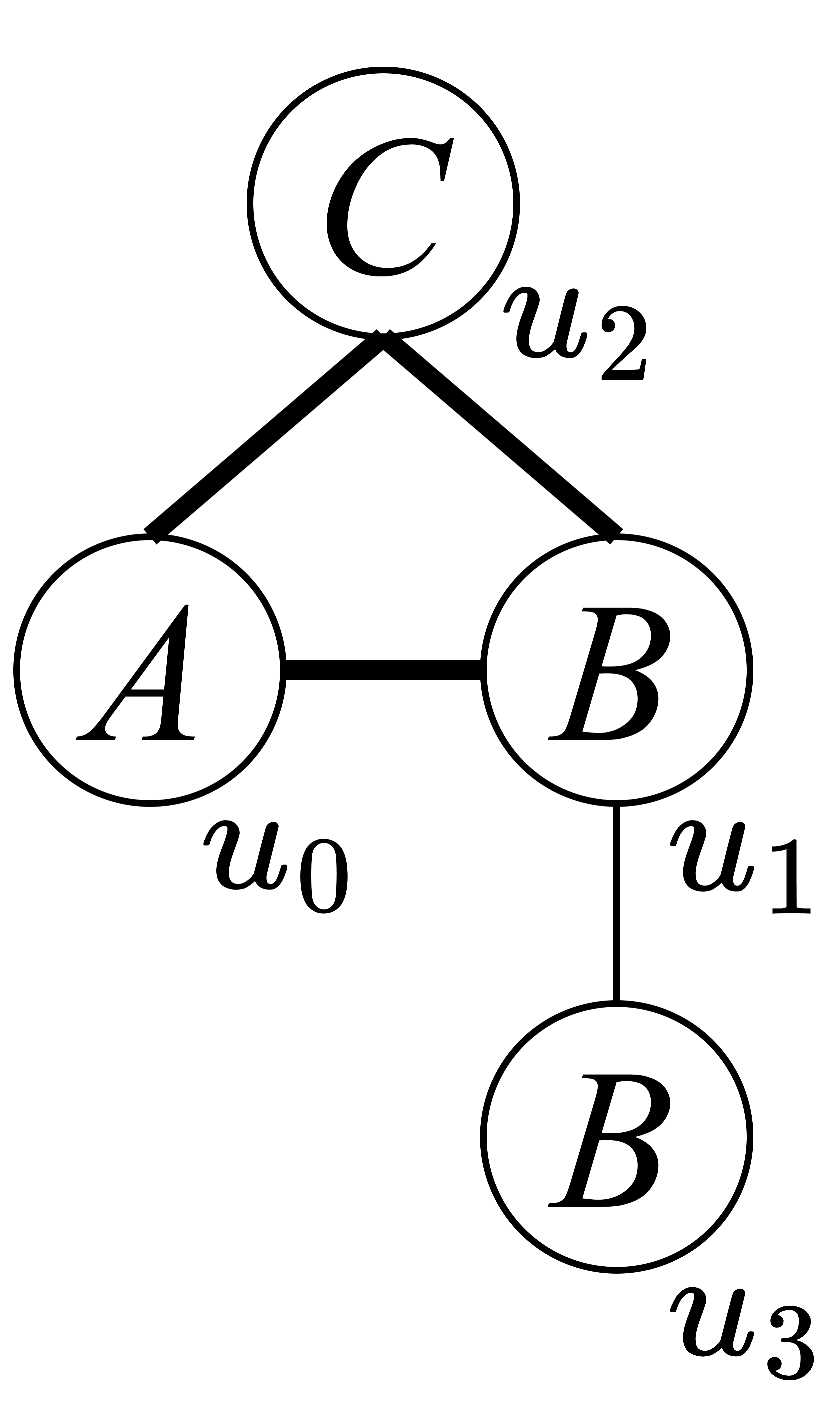}
        \caption{Query graph \(Q\).}
        \label{fig.query}
    \end{subfigure}\quad
    \begin{subfigure}[t]{0.5\textwidth}
        \centering
        \includegraphics[width=0.6\textwidth]{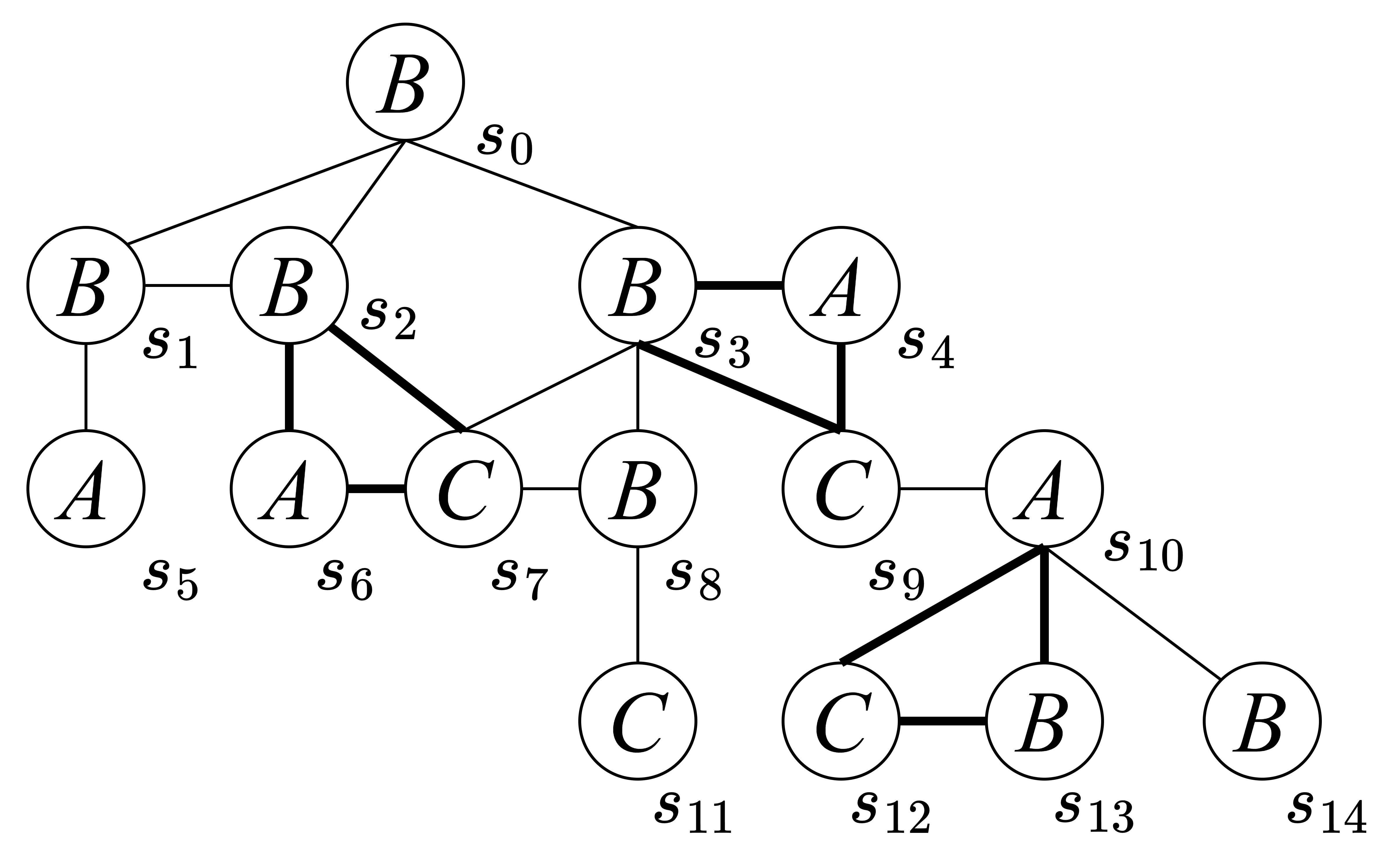}
        \caption{Data graph \(D\).}
        \label{fig.data}
    \end{subfigure}\
    \caption{Query Graph, Data Graph, and Embeddings.}
    \label{fig.QD}
\end{figure}

\topcaption{Notations.} \label{table.notation}
\tablefirsthead{\hline
    \textbf{Notation} & \textbf{Description} \\
\hline\hline}
\tablehead{
    \multicolumn{2}{c}%
    {{\captionsize\bfseries \tablename\ \thetable{} --
    continued from previous column}} \\
    \hline \textbf{Notation} & \textbf{Description} \\
\hline}
\tablelasthead{
    \multicolumn{2}{c}%
    {{\captionsize\bfseries \tablename\ \thetable{} --
    continued from previous column}} \\
    \hline \textbf{Notation} & \textbf{Description} \\
\hline}
\tabletail{\hline}
\tablelasttail{\hline \hline}

\begin{xtabular}{|p{0.4\linewidth}|p{0.5\linewidth}|}
        
        \(D\), \(Q\) & Data Graph and Query Graph \\ \hline
        
        \(V(D)\), \(E(D)\) & Data vertex set and data edge set \\ \hline
        
        \(V(Q)\), \(E(Q)\) & Query vertex set and query edge set \\ \hline
        
        \(e_D(s,t)\) & data edge connecting data vertex \(s\) and \(t\) \\ \hline
        
        \(e_Q(u,v)\) & query edge connecting query vertex \(u\) and \(v\) \\ \hline
        
        \(\Sigma\) & label set \\ \hline

        \(L:V\rightarrow\Sigma\) & Label function that maps each vertex in vertex set \(V\) to a label in label set \(\Sigma\) \\ \hline
        
        \(L(u)\) & Label of query vertex \(u\) \\ \hline

        \(N(u)\) & Neighbor vertex set of query vertex \(u\) \\ \hline

        \(N_l(u)\) & Neighbor vertex set of query vertex \(u\) with label \(l\) \\ \hline

        \(|N_l(u)|\) & Neighboring Label Frequency (\textbf{NLF}) of query vertex \(u\) with label \(l\) \\ \hline

        \(\omega_l(u)\) & Neighboring Label Offset (\textbf{NLO}) of query vertex \(u\) with label \(l\) \\ \hline

        \(\pi_l(u)=(\omega_l(u), \, |N_l(u)|)\) & Pair of NLO and NLF of query vertex \(u\) with label \(l\) \\ \hline

        \(|u|\) & Degree of query vertex \(u\) \\ \hline

        \(C(u)=\{s\in V(D)\, |\, |s|\geqslant|u|,\, L(s)=L(u)\}\) & Candidate set of data vertices that have similar label and greater or equal degree to the query vertex \(u\) \\ \hline
        
        \(I\) & Index data structure \\ \hline
        
        \(I_{v}^{u}(s)\) & Neighbor vertex set \(N(s)\) of data vertex \(s\) in \(C(v)\) where \(s\in C(u)\) and \(e_Q(u,v)\in E(Q)\) \\ \hline
        
        \(\Phi\) & Label Pair Index (\textbf{LPI}) \\ \hline
        
        \(\Phi_{l_1}\) & Data vertex set with label \(l_1\) \\ \hline
        
        \(\Phi_{l_2}^{l_1}\) & Data edges that have one data vertex with label \(l_1\) and a neighbor data vertex with label \(l_2\) \\ \hline
        
        \(T\) & Breadth First Search (\textbf{BFS}) tree decomposition of query graph \(Q\) \\ \hline

        \(u.p, u.c\) & Parent and child query vertex of query vertex \(u\) in BFS tree \(T\) \\ \hline
        
        \(o\) & Enumeration order is a traversal sequence of query vertex in query graph \(Q\) during enumeration step \\ \hline
        
        \(N_{-}^{o}(u)\) & Neighbor query vertex set of query vertex \(u\) that are positioned ahead of \(u\) in the enumeration order \(o\) \\ \hline

        \(N_{+}^{o}(u)\) & Neighbor query vertex set of query vertex \(u\) that are positioned after \(u\) in the enumeration order \(o\) \\ \hline

        \(\mu\) & Partial mapping of incomplete subgraph of data graph \(D\) to query graph \(Q\) \\ \hline

        \(LC(u)= \begin{cases}
            \begin{aligned}[c]
                \{s\in C(u)\,| \\
                \exists e_D(s,\mu[u.b]) \\
                \in E(D) \\
                \forall u.b\in N_{-}^{o}(u)\} \\
                if\, |N_{-}^{o}(u)|>0
            \end{aligned} \\
            \begin{aligned}[c]
                C(u) \\ 
                if\, |N_{-}^{o}(u)|=0
            \end{aligned}
        \end{cases} \) & If query vertex \(u\) is not root vertex in \(T\), the local candidate set is a subset of \(C(u)\) for the query vertex \(u\) such that each candidate \(s\in LC(u)\) is connected to all candidate of backward neighbors of \(u\) in \(\mu\), otherwise \(LC(u)=C(u)\) \\ \hline

        \(\mu[u]\) & A data vertex \(s\in LC(u)\) that is mapped to \(u\) in \(\mu\) \\ \hline
        
        \(f:V(Q)\rightarrow V(D)\) & An injective function that maps query vertices to data vertices such that for all query edge \(e_Q(u,v)\in E(Q)\), \(L(u)=L(f(u))\), \(L(v)=L(f(v))\), \(|f(u)|\geqslant|u|\), \(|f(v)|\geqslant|v|\), and \(e_D(f(u),f(v))\in E(D)\). \\
    \hline
\end{xtabular}

Given a query graph \(Q\), a data graph \(D\), the set of query vertices \(V(Q)\) and data vertices \(V(D)\), the set of query edges \(E(Q)\) and data edges \(E(D)\), and a label function \(L\) that maps each vertex in \(V(Q)\) and \(V(D)\) to a set of labels \(\Sigma\), non-induced SI is defined as an injective function \(f:V(Q)\rightarrow V(D)\) that maps the query vertices \(V(Q)\) to data vertices \(V(D)\) such that for all query edge \(e_Q(u,v)\in E(Q)\), \(L(u)=L(f(u))\), \(L(v)=L(f(v))\), \(|f(u)|\geqslant|u|\), \(|f(v)|\geqslant|v|\), and \(e_D(f(u),f(v))\in E(D)\). 
For instance, the set of vertices of an isomorphic subgraph that can be mapped to \(\{u_3, u_1, u_0, u_2\}\) in the query graph \(Q\) are shown in Figure \ref{fig.QD} as \(\{\)\(\{s_0, s_2, s_6, s_7\}\), \(\{s_1, s_2, s_6, s_7\}\), \(\{s_0, s_3, s_4, s_9\}\), \(\{s_8, s_3, s_4, s_9\}\)\(\}\). 
Each vertex's label is represented by an alphabet within it. Table \ref{table.notation} outlines the frequently-used notations in this paper.

\section{Related Work}
There are varying methods for approaching the SI problem, such as query graph decomposition, search space reduction, and query parallelization.

\subsection{Indexing Method}
SubGlw decomposes the data graph into small candidate graphs using a ranking function based on candidate count and maximum distance between data vertices \cite{subglw}. 
Nonetheless, the decomposition of the data graph takes an additional \(O(|V'|)\) time. 
VC \cite{VC} implements a bipartite graph index, but the cost (size) of each query vertex is calculated in advance before filtering and pruning procedures. 
Meanwhile, SMS2 \cite{sms2} employs lattice-based index and hashing algorithm as an alternative SI indexing approach. 
QuickSI \cite{QuickSI} sorts the ordering of edges in the query graph based on the frequency of the edges presented in the data graph. 
TurboISO \cite{TurboIso} and CFL-Match \cite{CFL} implement a tree-based index to reduce pruning cost and iteration. Furthermore, the matching is performed at the index level instead of the data level. 
However, both TurboISO \cite{TurboIso} and CFL-Match \cite{CFL} have the common drawback that the ordering of edges is not considered when traversing the index.

\subsection{Reducing Search Space}
In terms of reducing search space, VEQ\textsubscript{M} employs dynamic equivalence to avoid enumerating candidates that would not yield any embedding if a candidate that shares similar neighbors rooted at a subtree in Candidate Space (CS) fails to yield any embedding~\cite{veqm}. 
Here, embedding refers to a complete subgraph that is isomorphic to the query graph \(Q\). 
Besides, HyGraph utilized branching in a potential isomorphism boundary to select candidates \cite{hygraph}. The branching approach in HyGraph is prone to automorphism, i.e., the permutation of partial embedding. 
VF2 suggests the employment of ordered relationships and State Space Representation (SSR), which resembles an automaton machine where certain conditions must be fulfilled to proceed to the next state, also known as the feasibility rules. Candidate graphs are collected and refined progressively while transiting through states in SSR. 
Furthermore, GraphQL \cite{GQL} proposes three steps to reduce search space, such as neighbourhood subgraph string profile, joint reduction, and search ordering using a cost model. Nonetheless, GraphQL \cite{GQL} does not support recursive pattern search.

\subsection{Parallel Processing}
Parallel processing technique exploits the advantages of processing data simultaneously to increase the efficiency of an algorithm. 
Glasgow is a constraint programming SI algorithm adopting the parallel processing method \cite{glasgow}. It employs a bit-parallel data structure and threads to explore search spaces in parallel. 
Similarly, G-Morph algorithm makes use of the advantage of parallel processing on a GPU to solve induced SI for labeled graphs \cite{gmorph}. 
Last but not least, GR1 is a parallel processing SI algorithm that presents a transition from parallelism to a quantum computing approach to solve the SI problem \cite{gr1}. It facilitates a producer-consumer pattern that allows interchanges of varying existing pruning strategies to adapt to different use cases, though it has only experimented on the protein-to-protein interaction graph.

\section{l2Match}

\subsection{Overview}
\begin{algorithm}
\SetAlgoLined
\SetKwProg{Pn}{Function}{:}{end}
\SetKwProg{Fn}{Object}{:}{end}
\SetKwFunction{constructLPI}{constructLPI}
\SetKwFunction{convertToCSR}{convertToCSR}
\SetKwFunction{Filter}{Filter}
\SetKwFunction{Ordering}{Ordering}
\SetKwFunction{Enumerate}{Enumerate}
\SetKwInOut{Input}{Input}\SetKwInOut{Output}{Output}
\Input{Query graph $Q$, Data graph $D$}
\Output{All matched subgraph isomorphic to $Q$}

$Q\leftarrow$ \convertToCSR{$Q$} \tcp*{Convert $Q$ to CSR}
$D,\Phi,\pi\leftarrow$ \constructLPI{$D$} \tcp*{Construct Label Pair Index}
$C,I,o\leftarrow$ \Filter{$Q,D,\Phi,\pi$} \tcp*{Ordering, filtering and index construction}
\Enumerate{$Q,D,C,I,o,\{\},0$} \tcp*{Recursive enumeration}

\caption{l2Match.}
\label{algo:main}
\end{algorithm}

l2Match takes the form of filter-and-verification framework described in \cite{CFL, PLGCoding}. 
Algorithm \ref{algo:main} shows the highly-abstracted pseudocode of l2Match. 
It firstly constructs a Label Pair Index (\textbf{LPI}) on the query and data graph in Section \ref{subsec:LPI}. Subsequently, the filtering step prunes the invalid candidates and stores the partial candidate-to-query vertex mappings in an auxiliary indexing data structure, Compact-Embedding Cluster Index (\textbf{CECI}) identified with the notation \(I\). The filtering step comprises of a Label Pair Filtering (\textbf{LPF}) technique defined in Section \ref{subsec:lpf} to filter and map the candidates to each query vertex with label pair constraint, Breadth-First Search Ordering method to sort the traversal order of the query vertices in the enumeration step, \textbf{FCG} to generates and filters the candidates for each query vertex, and \textbf{BCPRefine} defined in Section \ref{subsec:bcprefine} to prune the invalid candidates and update the auxiliary indexing data structure. 
Finally, the enumeration step recursively verifies and extends the partial mapping in the auxiliary indexing data structure into an embedding. l2Match uses Jump-Redo (\textbf{JR}) method defined in Section \ref{subsec:JR} to reduce the exploration of redundant search branches in the enumeration step. The novelty of this research is observed in LPI, LPF, BCPRefine, and JR method. 

\subsection{Neighboring Label Index}
\begin{figure}[!ht]
\centering
  \includegraphics[width=0.2\textwidth]{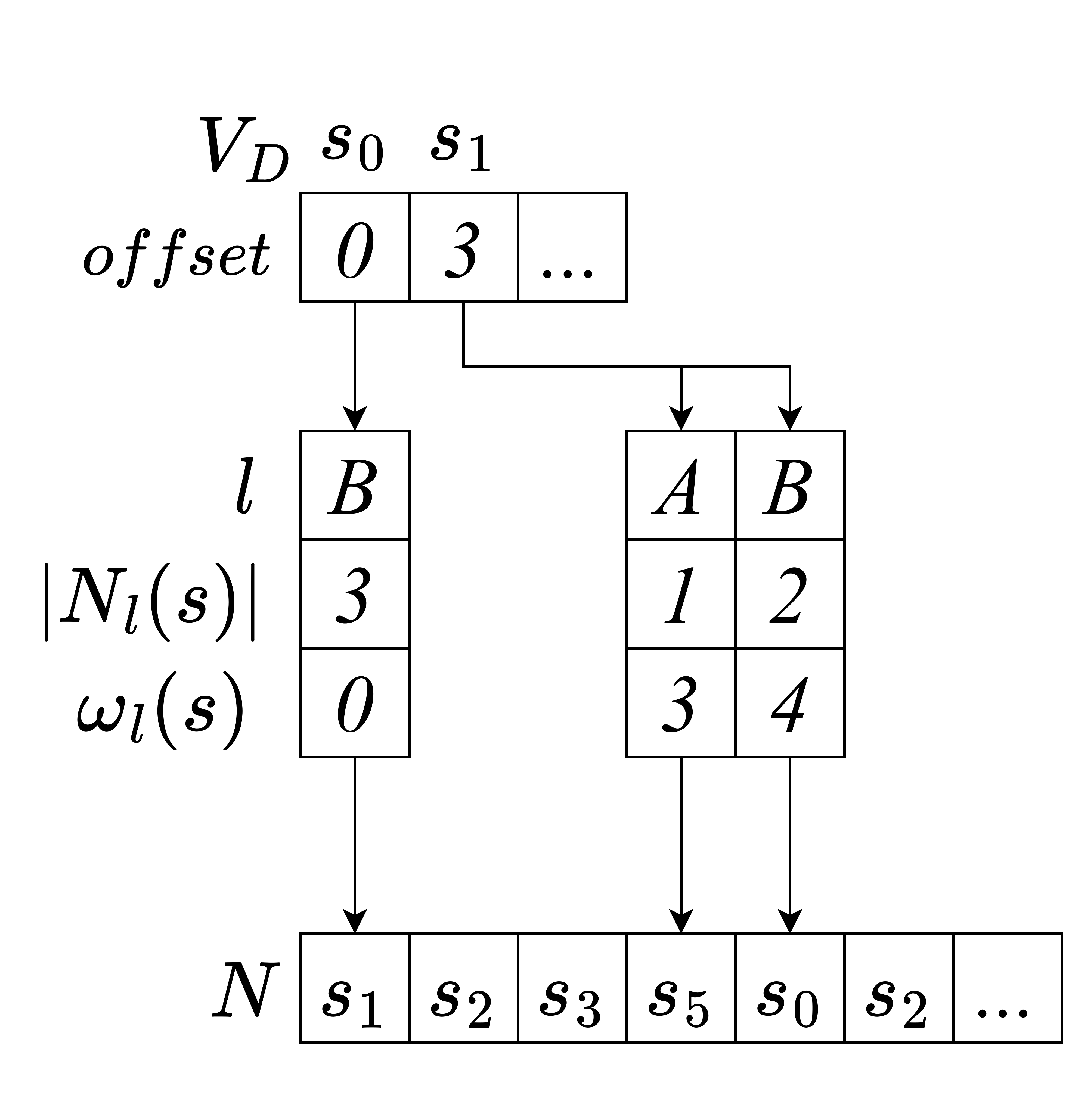}%
  \caption{Neighboring Label Index \(\pi\)}
  \label{fig.NLO}
\end{figure}

Neighboring vertices of any vertex \(u\) may have identical or dissimilar labels. The size of each group with a \emph{distinct} label \(\forall l\in L(N(u))\) is defined as Neighboring Label Frequency (NLF) in \cite{CFL, VC, treespan}, or formally \(|N_l(u)|\). NLF is computed and generated during preprocessing operations on the input graphs. 
Neighboring Label Offset (NLO), extends the concept of NLF by enabling direct access to a vertex's neighbors with a particular label. 
FCG and BCPRefine filtering technique employ NLO to skip validation on the neighbors of any candidate with an undesired label. 
Given a label \(l\), the NLO \(\omega_l(u)\) of a vertex \(u\) stores the offset position to the neighbor vertex set \(N_l(u)\) in the sorted adjacent array \(N\) of a Compressed Sparse Row (\textbf{CSR}). The combination of NLO and NLF of query vertex \(u\) with label \(l\) is defined as Neighboring Label Index \(\pi_l(u)=\bigl( \omega_l(u),\, |N_l(u)| \bigr)\) and computed during the computation of NLF.

\subsection{Label Pair Index}
\label{subsec:LPI}
\begin{figure}[!ht]
\centering
  \includegraphics[width=0.4\textwidth]{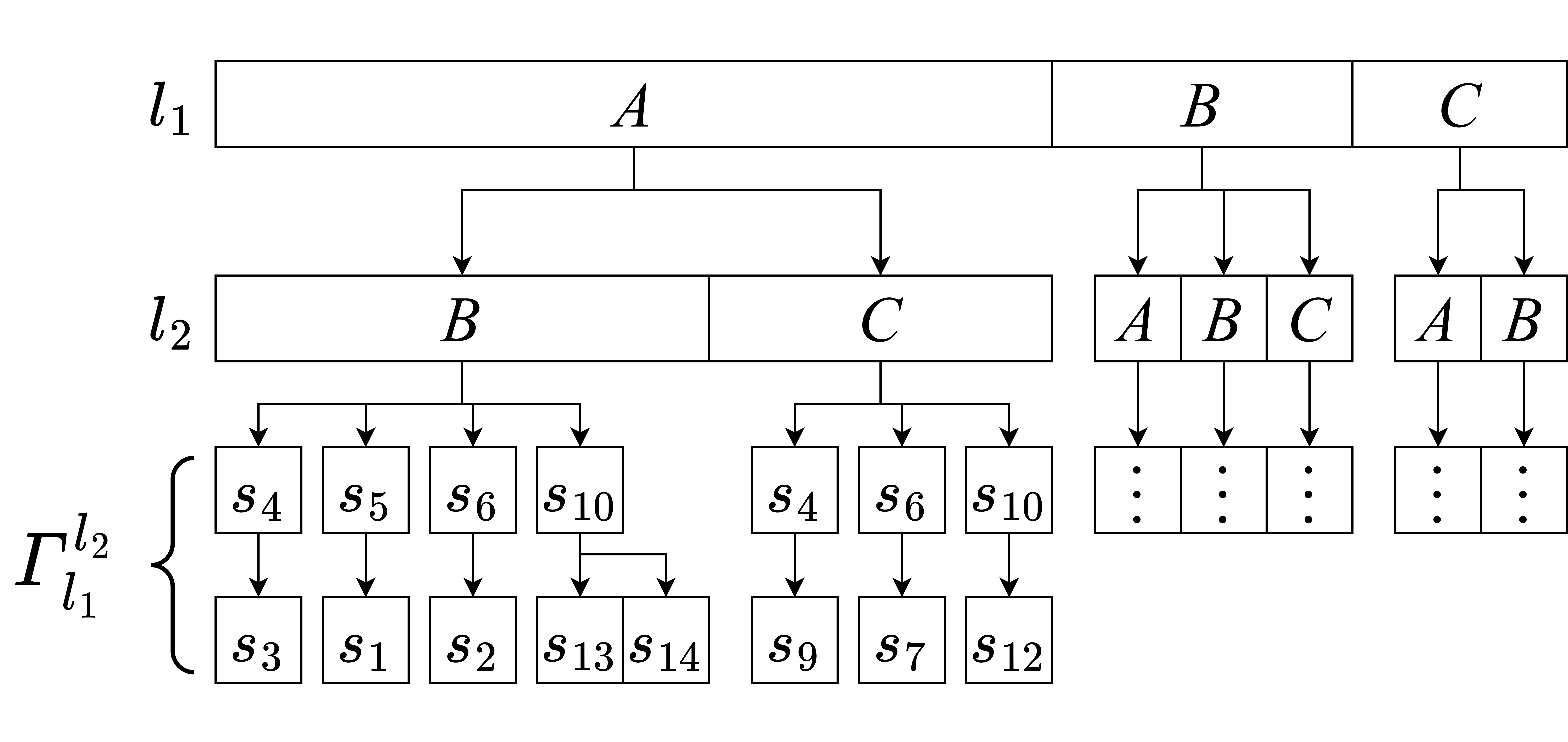}%
  \caption{Label Pair Index \(\Phi\)}
  \label{fig.LPI}
\end{figure}

Each edge in the context of the data graph \(D\) connects two vertices, for example, \(e_D(s,t)\in E(D)\). 
As a result, the labels of the two ends of an edge form a label pair \(\bigl(L(s),L(t)\bigr)\). Label Pair Index (\textbf{LPI}), denoted by \(\Phi\), groups data edges \(E(D)\) by the label pair of each edge into an index data structure. 
It allows direct access to data edges with the desired label pair \(\bigl(L(s),L(t)\bigr) = \bigl(L(u),L(v)\bigr)\). This approach reduces the number of iterations down to \(\Sigma_{i=0}^{|u|} |\Phi_{L(u)}^{L(v_i)}| \quad \forall v_i\in N(u)\).

\subsection{Label Pair Filtering}
\label{subsec:lpf}
LPF employs LPI to yield a minimal candidate set \(C(u)\) for any query vertex \(u\). It computes the smallest subset of data edges with the label \(L(u)\) and neighboring labels \(L\bigl(N(u)\bigr)\). The neighboring labels of \(u\) that generate the minimum subset of data edges are referred to as \(l_{min}\). Following this, it performs degree and NLF filtering. As a result, LPF skips the degree and NLF filtering on the data vertices that have a different label than \(u\), or are not connected to any neighbors with the label \(l_{min}\). LPF uses the edge filtering technique for greater pruning power compared to conventional methods, which use the vertex filtering technique to obtain a preliminary set of data vertices before degree and NLF filtering. 
In other words, the smaller the preliminary set of data vertices to process, the less filtration needs to be performed.

\subsection{Backward Candidate Pruning and Refinement}
\label{subsec:bcprefine}
\begin{lemma}
\label{lemma:removefromparent}
Removing just an invalid candidate \(s\) from parent index \(I_{u}^{u.p}(t)\) and \(C(u)\) 
is sufficient to prevent \(s\) from being returned as false positive candidate of \(LC(u)\) in the enumeration step.
\end{lemma}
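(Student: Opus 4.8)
The plan is to prove the statement by tracking precisely how the enumeration step obtains the members of $LC(u)$ and showing that $C(u)$ and the parent index $I_{u}^{u.p}(\cdot)$ are the only places from which a data vertex $s$ can be admitted into $LC(u)$. First I would recall the definition of $LC(u)$ from the notation table and confirm that the enumeration order $o$, being a BFS ordering consistent with the tree $T$, places every non-root query vertex $u$ after its tree parent, so that $u.p\in N_{-}^{o}(u)$. I would then make explicit the operational invariant maintained by the enumeration: upon reaching $u$ it materialises $LC(u)$ as $C(u)$ when $|N_{-}^{o}(u)|=0$, and otherwise as a subset of the precomputed set $I_{u}^{u.p}(\mu[u.p])$ obtained by intersecting with, or filtering by, the adjacency tests against $\mu[u.b]$ for the remaining backward neighbours $u.b$. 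The point is that the enumeration never re-scans $E(D)$ to rediscover candidates of $u$; it only reads the auxiliary structures $C$ and $I$ and applies set intersection, which can delete elements but never create them.

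Granting that invariant, the argument is a short case split. If $u$ is the root then $LC(u)=C(u)$, so deleting $s$ from $C(u)$ already forces $s\notin LC(u)$. If $u$ is non-root then $LC(u)\subseteq I_{u}^{u.p}(\mu[u.p])$ for whatever value $\mu[u.p]$ takes along the current branch; hence removing $s$ from each parent-index entry $I_{u}^{u.p}(t)$ that still contains it removes $s$ from that factor for every possible $\mu[u.p]$, and therefore from $LC(u)$. What remains is to check that no other stale occurrence of $s$ can re-admit it: an entry $I_{u.c}^{u}(s)$ on a child edge is keyed on $s$ and is consulted only while $\mu[u]=s$, which can no longer occur, so it is dead data; and any adjacency information retained for a non-tree backward edge $(u,w)$ either contributes to $LC(u)$ as a further conjunct $I_{u}^{w}(\cdot)$ of the intersection, so a stale $s$ there is masked by the enforced absence of $s$ from the parent conjunct, or is checked only after $\mu[u]$ has been fixed, in which case $s$ is never even reached. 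Combining the two cases yields the lemma, and in particular shows that the extra invalid-candidate-removal and refinement passes of the original scheme are unnecessary.

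The step I expect to be the main obstacle is establishing the operational invariant rigorously rather than appealing to it: namely that throughout the enumeration $LC(u)$ is sourced exclusively from $C(u)$ (root) or from the parent index (non-root), that it is closed under the backward-adjacency filters by monotonicity of intersection, and that an $s$ flagged invalid by \textbf{BCPRefine} is invalid uniformly over all parent candidates $t$, so that clearing it from the parent entries it occupies is genuinely sufficient and not merely sufficient for the single $t$ at hand. Once that invariant is pinned down, the case analysis above is routine.
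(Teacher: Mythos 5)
Your proposal is correct and follows essentially the same route as the paper: both arguments rest on the fact that during enumeration $LC(u)$ is materialised as $C(u)$ for the root and otherwise as the intersection $\bigcap_{u.b\in N_{-}^{o}(u)} I_{u}^{u.b}(\mu[u.b])$, which contains the parent factor $I_{u}^{u.p}(\mu[u.p])$ since $u.p\in N_{-}^{o}(u)$, so clearing $s$ from every parent entry (and from $C(u)$) excludes it from $LC(u)$. Your extra checks on stale child-keyed entries $I_{u.f}^{u}(s)$ are handled in the paper by the separate Lemma \ref{lemma:notremovefromchild}, and the ``operational invariant'' you flag as the delicate point is likewise taken as given (implicitly, from the algorithm description) in the paper's own proof.
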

\begin{proof}
\label{theorem:removefromparent}
Assume that removing just an invalid candidate \(s\) from \(I_{u}^{u.p}(t)\) and \(C(u)\) where \(s\in C(u)\) and \(e_D(s,t)\in E(D) \quad \forall t\in C(u.p)\) will cause \(s\) to be perceived as valid candidate in local candidate set \(LC(u)=\bigcap_{u.b\in N_{-}^{o}(u)} I_{u}^{u.b}(\mu[u.b])\) of \(u\) during the enumeration step. 
Since \(s\notin I_{u}^{u.p}(t) \quad \forall t\in C(u.p)\) and \(u.p\in N_{-}^{o}(u)\), thus \(s\notin LC(u)\). The lemma holds by contradiction.
\end{proof}

\begin{lemma}
\label{lemma:notremovefromchild}
Retaining an invalid candidate \(s\) from index of forward neighbor \(I_{u.f}^{u}(s) \quad \forall u.f\in N_{+}^{o}(u)\) will not yield \(s\) as a false positive candidate mapped to \(u\) in the enumeration step.
\end{lemma}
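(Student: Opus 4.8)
The plan is to show that the false-positive of $s$ under $u$ during enumeration is governed only by the backward-neighbor indices $I_{u}^{u.b}(\mu[u.b])$ for $u.b \in N_{-}^{o}(u)$, and that a forward neighbor $u.f \in N_{+}^{o}(u)$ never appears in $N_{-}^{o}(u)$ by the definition of the enumeration order $o$. First I would recall, from the notation table, that during enumeration a data vertex $s$ is proposed as $\mu[u]$ only if $s \in LC(u)$, and that $LC(u) = \bigcap_{u.b \in N_{-}^{o}(u)} I_{u}^{u.b}(\mu[u.b])$ whenever $|N_{-}^{o}(u)| > 0$ (and $LC(u) = C(u)$ otherwise). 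The key structural observation is that the index $I_{u.f}^{u}(s)$ stores the neighbor set of $s$ viewed from the side of the \emph{forward} query vertex $u.f$, i.e. it is consulted when $u.f$ is being matched, not when $u$ is being matched; hence it contributes to $LC(u.f)$, not to $LC(u)$.

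The argument then proceeds by contradiction, mirroring the structure of the proof of Lemma~\ref{lemma:removefromparent}. Suppose, for contradiction, that retaining the invalid candidate $s$ inside $I_{u.f}^{u}(s)$ for some $u.f \in N_{+}^{o}(u)$ does cause $s$ to be returned as a candidate mapped to $u$ in the enumeration step. By the matching rule, this requires $s \in LC(u)$, and since $s$ is invalid we must have $|N_{-}^{o}(u)| > 0$, so $LC(u) = \bigcap_{u.b \in N_{-}^{o}(u)} I_{u}^{u.b}(\mu[u.b])$. None of the sets in this intersection is of the form $I_{u.f}^{u}(\cdot)$: every set in it has $u$ as its \emph{subscript} (the vertex being populated) and a backward neighbor $u.b$ as its \emph{superscript}, whereas $I_{u.f}^{u}(s)$ has $u$ as superscript and the forward neighbor $u.f$ as subscript. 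Therefore the membership status of $s$ in $I_{u.f}^{u}(s)$ is irrelevant to whether $s \in LC(u)$: $s$ can enter $LC(u)$ only through the backward indices, which is a separate question handled by Lemma~\ref{lemma:removefromparent}. This contradicts the assumption that retaining $s$ in the forward index is what produced the false positive, so the lemma holds.

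I would also make explicit the order-consistency fact that justifies the disjointness of $N_{-}^{o}(u)$ and $N_{+}^{o}(u)$: by definition $N_{-}^{o}(u)$ collects neighbors of $u$ positioned before $u$ in $o$ and $N_{+}^{o}(u)$ those positioned after, so $u.f \in N_{+}^{o}(u)$ implies $u.f \notin N_{-}^{o}(u)$, and in particular $I_{u}^{u.f}(\cdot)$ does not occur in the intersection defining $LC(u)$. Combined with the previous paragraph, this gives the claim. The main obstacle I anticipate is purely notational rather than mathematical: one must be careful that $I_{u.f}^{u}(s)$ and $I_{u}^{u.f}(\mu[u.f])$ are genuinely different objects (the former keyed on $s \in C(u)$ and listing its neighbors in $C(u.f)$, the latter keyed on $\mu[u.f] \in C(u.f)$ and listing its neighbors in $C(u)$), and that only the latter — which is \emph{not} being modified here — participates in filtering $u.f$, while the object under discussion $I_{u.f}^{u}(s)$ participates only in filtering $u.f$ from $u$'s side and is consumed strictly later in the order. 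Articulating this asymmetry cleanly, so the reader sees why a stale entry propagates no error, is the crux; once it is stated, the contradiction is immediate.
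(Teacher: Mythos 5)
Your proof is correct and follows essentially the same route as the paper's: both argue by contradiction that, since \(LC(u)\) is an intersection of backward-neighbor indices only and Lemma~\ref{lemma:removefromparent} already guarantees \(s\notin LC(u)\), the retained forward entry \(I_{u.f}^{u}(s)\) can never cause \(s\) to be mapped to \(u\) (indeed it is never even consulted, because it would only be looked up with key \(\mu[u]\), and \(s\) never becomes \(\mu[u]\)). One caveat: in your final paragraph the clause asserting that \(I_{u}^{u.f}(\mu[u.f])\) is what ``participates in filtering \(u.f\)'' has the orientation backwards --- it is \(I_{u.f}^{u}(\mu[u])\) that is consulted when computing \(LC(u.f)\) --- but this slip lies outside the core argument and does not affect its validity.
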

\begin{proof}
\label{theorem:notremovefromchild}
Assume that retaining an invalid candidate \(s\) in \(I_{u.f}^{u}(s) \quad \forall u.f\in N_{+}^{o}(u)\) where \(e_D(s,t)\in E(D) \quad \forall t\in I_{u.f}^{u}(s)\) will cause \(s\) to be taken as valid candidate for computing local candidate set \(LC(u.f)=\bigcap_{u.f\in N_{+}^{o}(u)} I_{u.f}^{u}(\mu[u])\). Since \(s\notin I_{u}^{u.p}(t) \quad \forall t\in C(u.p)\) and \(s\notin LC(u)\) according to Lemma \ref{lemma:removefromparent}, \(s\) is never mapped to \(u\) as \(\mu[u]\). This contradicts the assumption. Hence, we proof by contradiction that the lemma holds.
\end{proof}

\begin{definition}
\label{def:BCPRefine}
\textit{Backward Candidate Pruning Refinement (BCPRefine) Rule:} A candidate \(s\in C(u)\) of a query vertex \(u\) that is not connected to any candidate \(t\in C(u.b)\) of one or more backward neighbors \(u.b\in N_{-}^{o}(u)\) can be safely pruned, such that $\exists u.b\in N_{-}^{o}(u), N(s)\cap C(u.b) = \varnothing$. Since \(e_D(s,t)\notin E(D) \quad \forall t\in C(u.b)\) for some \(u.b\in N_{-}^{o}(u)\) but \(e_Q(u,v)\in E(Q)\), thus \(s\) violates the injective function \(f\).
\end{definition}

BCPRefine integrates the backward refinement step into the Backward Candidate Pruning (\textbf{BCP}) method through the utilization of two observations, namely Lemma \ref{lemma:removefromparent} and \ref{lemma:notremovefromchild}, thus reducing the expensive removal of invalid candidates from the candidate set \(C\) and auxiliary index \(I\) despite consuming more memory. The combined method, BCPRefine, employs a filtering constraint as shown in Definition \ref{def:BCPRefine}. It performs the filtering constraint on any query vertex and its backward neighbors, excluding the root vertex in the BFS tree \(T\). As a result, BCPRefine is less complex than the ordinary BCP method but has weaker pruning power.

\subsection{Jump and Redo}
\label{subsec:JR}
Jump and Redo (\textbf{JR}) is an optimization method that reduces search branches in the enumeration step given a static enumeration order. 
The result is a reduction in the quantity of data vertices to be explored. 
If a search branch on the current query vertex fails, the enumeration step jumps to the nearest backward neighbor in the enumeration order. 
Meanwhile, each query vertex between the jump (the source and the target query vertices) is mapped to the first candidate in its respective local candidate set. 
Recalculating the local candidates set for each intermediate query vertex will yield the same local candidates, inclusive of candidates that has been mapped to other query vertices, unless the mapping of the vertex's backward neighbors has been changed. This is because the computation of the local candidates of any query vertex is dependent on the mapping of its backward neighbors. As a result, the enumeration step will only recompute the local candidate set for any query vertex if the mapping of its backward neighbors has altered.

\section{Environment Setup}
\label{sec:environment}
All code is compiled in C++20 programming language with cmake (version 3.20.1, linux-x86\_64 distribution) g++ (GCC) v10.2.0 using O3 optimization and additional flags ("-march=native", "-pthread") on a machine with Intel Xeon W-2145 3.70GHz base clock speed, eight cores processor and 64GB RAM running Ubuntu 18.04.5 LTS OS. 

\section{Implemented Algorithm}
\label{sec:implementedAlgo}
Four state-of-the-art SI indexing algorithms, DP-iso (DPiso)\cite{DP-ISO}, GraphQL (GQL)\cite{GQL}, CFL-Match (CFL)\cite{CFL} and CECI\cite{CECI} are implemented to empirically compare against the proposed method, l2Match. We repurposed the source code of the state-of-the-art algorithms implemented by the researchers from a survey \cite{shixuan-survey}. 
Their codes can be found in the public repository \url{https://github.com/RapidsAtHKUST/SubgraphMatching/}. 
We also integrate the LPF filtering method and JR optimization technique into the CECI, CFL, GQL, and DPiso algorithms, namely CECI-LPF, CFL-LPF, GQL-LPF, DPiso-LPF, CECI-JR, CFL-JR, and GQL-JR algorithms, to measure the effectiveness of the optimization method. 
DP-iso is not compatible with the JR optimization method, as JR only works on static enumeration order but not dynamic enumeration order.

\section{Inputs}
\label{sec:inputs}

This research considers five real-world datasets (DBLP, Human, Patents, Yeast, and Youtube) and a publicly available synthetic dataset (ERP), undirected in this context and widely used in previous work \cite{exp3,QuickSI,exp2,exp1,hardness}, to evaluate the performance of the SI algorithms. The specifications of the datasets are described in Table \ref{table:dataset-query} and Table \ref{table:dataset-data}. 
Human and Yeast use edges to represent the interaction between proteins labeled using the Gene Ontology Term. 
Labels are generated at random and assigned to unlabeled datasets such as Patents, Youtube and DBLP. 
ERP consists of randomly generated Erdős–Rényi graphs with query graphs that are close to the phase transition between the satisfiable-unsatisfiable regions defined in \cite{satisfiable}. 
Satisfiability denotes whether a query graph is isomorphic to at least one subgraph in a given data graph, such that the query graph is satisfiable if true and otherwise unsatisfiable. 
On the other hand, the query graphs for the real-world datasets are induced subgraphs extracted from each dataset using the random walk method. 
These query instances are guaranteed to be satisfiable. 
The density of a graph is calculated using the formula \(density=\frac{2|E|}{\left(|V|(|V|-1)\right)}\).

\begin{table}[t]
    \caption{Properties of Query Datasets.\tnote{1}}
    \label{table:dataset-query}
    \centering
    \resizebox{0.5\textwidth}{!}{
        \begin{threeparttable}
            \begin{tabular}{|l|l r|l r|l r|l r|}
                \hline
                    \multirow{3}{*}{Category} & \multirow{3}{*}{Dataset} & \multirow{3}{*}{\#inst} & \multicolumn{6}{c|}{Query Graph} \\ \cline{4-9}
                    & & & \multicolumn{2}{c|}{$|V(Q)|$} & \multicolumn{2}{c|}{$|E(Q)|$} & \multicolumn{2}{c|}{density} \\ \cline{4-9}
                    & & & min & max & min & max & min & max \\
                \hline \hline
                    \multirow{2}{*}{\textbf{Biology}} & \textbf{Human} & 900 & 4 & 20 & 3 & 190 & .15 & 1 \\ \cline{2-9}
                    & \textbf{Yeast} & 900 & 4 & 32 & 3 & 88 & .09 & 1 \\ \hline \hline
                    
                    \textbf{Citation} & \textbf{Patents} & 900 & 4 & 32 & 3 & 210 & .09 & 1 \\ \hline \hline
                    
                    \multirow{2}{*}{\textbf{Social}} & \textbf{Youtube} & 900 & 4 & 32 & 3 & 89 & .09 & .83 \\ \cline{2-9}
                    & \textbf{DBLP} & 900 & 4 & 32 & 3 & 335 & .09 & 1 \\ \hline \hline
        
                    \textbf{Synthetic} & \textbf{ERP} & 200 & 30 & 30 & 128 & 387 & .29 & .89 \\
                \hline
            \end{tabular}
            \begin{tablenotes}
                \item[1] This table reports the specifications of each dataset with number of query graphs (\#inst), minimum and maximum number of query vertices ($|V(Q)|$), query edges ($|E(Q)|$) and density.
            \end{tablenotes}
        \end{threeparttable}
    }
\end{table}

\begin{table}[t]
    \caption{Properties of Data Graph.\tnote{1}}
    \label{table:dataset-data}
    \centering
    \resizebox{0.5\textwidth}{!}{
        \begin{threeparttable}
            \begin{tabular}{|l|l|l r|l r|l|r|l r|}
                \hline
                    \multirow{3}{*}{Category} & \multirow{3}{*}{Dataset} & \multicolumn{8}{c|}{Data Graph} \\ \cline{3-10}
                    & & \multicolumn{2}{c|}{\multirow{2}{*}{$|V(D)|$}} & \multicolumn{2}{c|}{\multirow{2}*{$|E(D)|$}} & \multirow{2}{*}{$|\Sigma|$} & average & \multicolumn{2}{c|}{\multirow{2}{*}{density}} \\
                    & & & & & & & degree & & \\
                \hline \hline
                    \multirow{2}{*}{\textbf{Biology}} & \textbf{Human} & 
                    \multicolumn{2}{l|}{4,674} & 
                    \multicolumn{2}{r|}{86,282} & 
                    44 & 
                    36.9200 & 
                    \multicolumn{2}{c|}{0.007901} \\ \cline{2-10}
                    & \textbf{Yeast} & 
                    \multicolumn{2}{l|}{3,112} & 
                    \multicolumn{2}{r|}{12,519} & 
                    71 & 
                    8.0456 & 
                    \multicolumn{2}{c|}{0.002586} \\ \hline \hline
                    
                    \textbf{Citation} & \textbf{Patents} & 
                    \multicolumn{2}{l|}{3,774,768} & 
                    \multicolumn{2}{r|}{16,518,947} & 
                    20 & 
                    8.7523 & 
                    \multicolumn{2}{c|}{0.000002} \\ \hline \hline
                    
                    \multirow{2}{*}{\textbf{Social}} & \textbf{Youtube} & 
                    \multicolumn{2}{l|}{1,134,890} & 
                    \multicolumn{2}{r|}{2,987,624} & 
                    25 & 
                    5.2650 & 
                    \multicolumn{2}{c|}{0.000005} \\ \cline{2-10}
                    & \textbf{DBLP} & 
                    \multicolumn{2}{l|}{317,080} & 
                    \multicolumn{2}{r|}{1,049,866} & 
                    15 & 
                    6.6221 & 
                    \multicolumn{2}{c|}{0.000021} \\ \hline
        
                    \multicolumn{2}{c}{ } & \multicolumn{1}{l}{min} & \multicolumn{1}{r}{max} & \multicolumn{1}{l}{min} & \multicolumn{1}{r}{max} & \multicolumn{2}{r}{ } & \multicolumn{1}{l}{min} & \multicolumn{1}{r}{max} \\ \hline
                    
                    \textbf{Synthetic} & \textbf{ERP} & 
                    150 & 150 & 
                    4132 & 8740 & 
                    0 & 
                    91.2718 & 
                    .37 & .78 \\
                \hline
            \end{tabular}
            \begin{tablenotes}
                \item[1] This table reports the specifications of each dataset with the number of data vertices ($|V(D)|$), data edges ($|E(D)|$), number of labels ($|\Sigma|$), average degree and density.
            \end{tablenotes}
        \end{threeparttable}
    }
\end{table}

\section{Metrics}
\label{sec:metrics}
The time taken for the filtering step (filtering time), the enumeration step (enumeration time), and the total time taken (query time) are measured in nanoseconds, however reported in seconds (s). This research excludes memory consumption as a metric, as all competing algorithms never exceed the maximum memory capacity. Meanwhile, the total candidate count is calculated as \(\Sigma_{u\in V(Q)}|C(u)|\). The enumeration step of each algorithm is halted at \(10^{6}\) embeddings to allow sufficient coverage of search space within a reasonable amount of time in accordance to \cite{CFL,shixuan-survey}. In addition, we set the time limit to \(0.3\times 10^{3}\) seconds (5 minutes) for the enumeration step. Any query that cannot be solved by one or more algorithms within the time limit is excluded from the computation of the average query time. The number of query edges are mostly greater than the number of query vertices except for some query graphs with \(|V(Q)|=4\) and \(|E(Q)|=3\). Hence, we report the performance evaluation over the number of query edges.

\section{Results and Analysis}
\label{sec:results}

\subsection{Hardness Analysis}

\begin{figure}
    \begin{subfigure}[t]{\columnwidth}
        \centering
        \includegraphics[width=\linewidth]{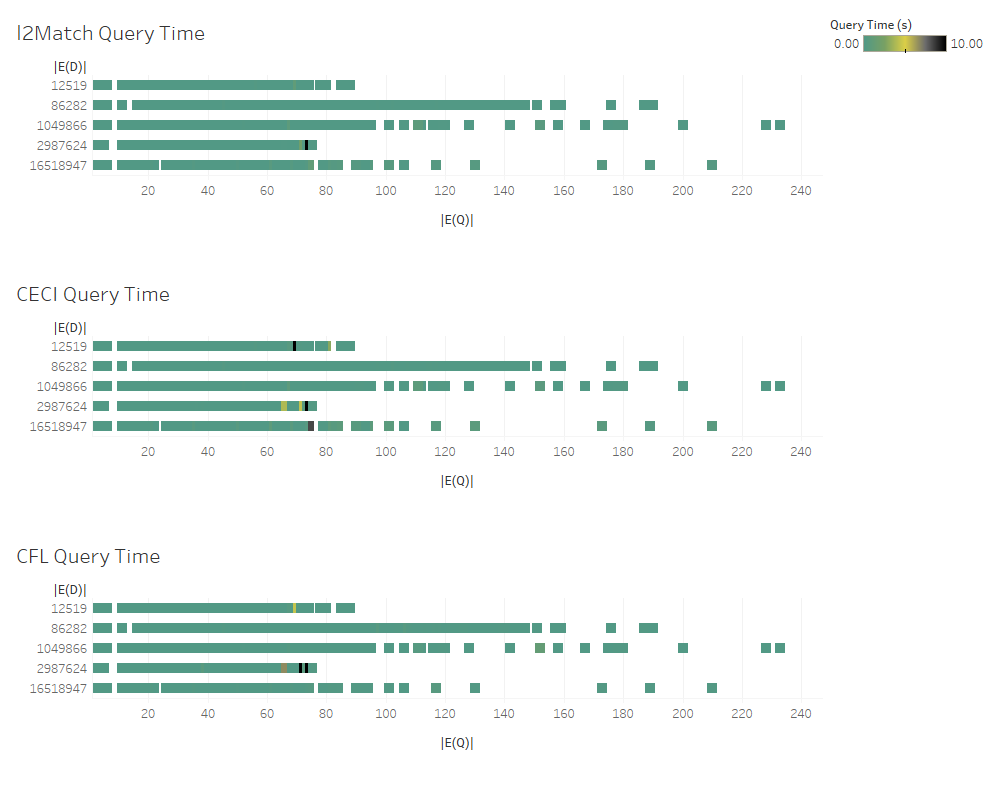}
    \end{subfigure}\quad
    \begin{subfigure}[t]{\columnwidth}
        \centering
        \includegraphics[width=\linewidth]{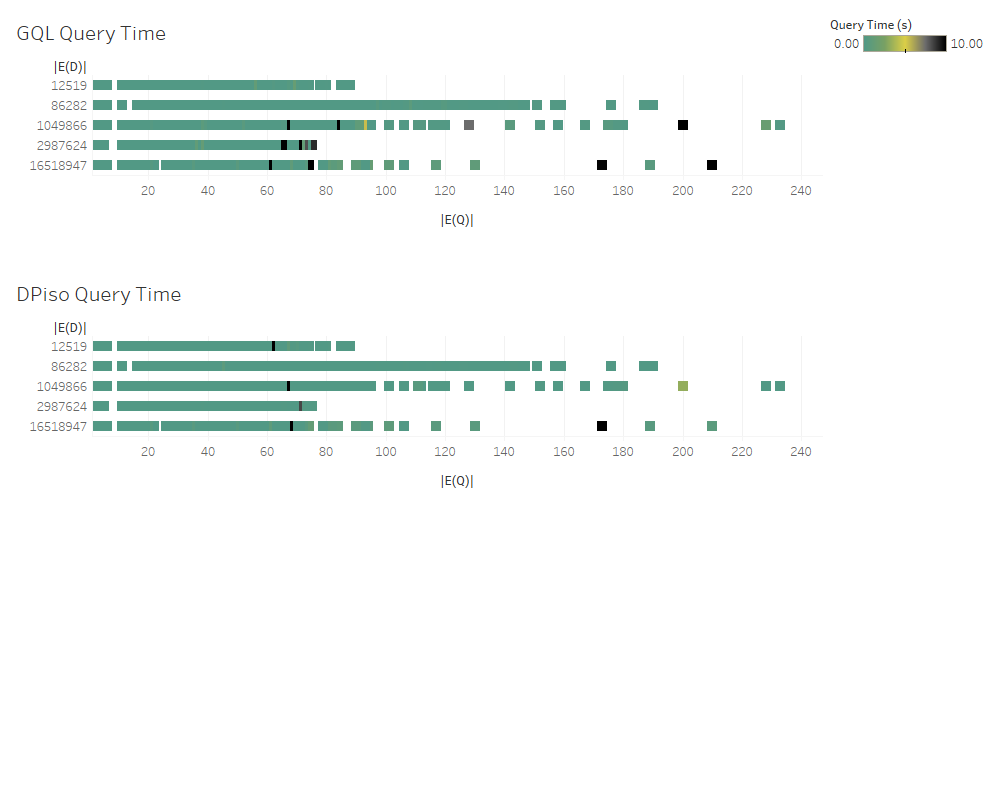}
    \end{subfigure}
    \caption{Number of query edge (x-axis), number of data edges (y-axis), and query time (color).}
    \label{fig.scatter_query}
\end{figure}

\begin{figure}
    \begin{subfigure}[t]{0.5\textwidth}
        \centering
        \includegraphics[width=\textwidth]{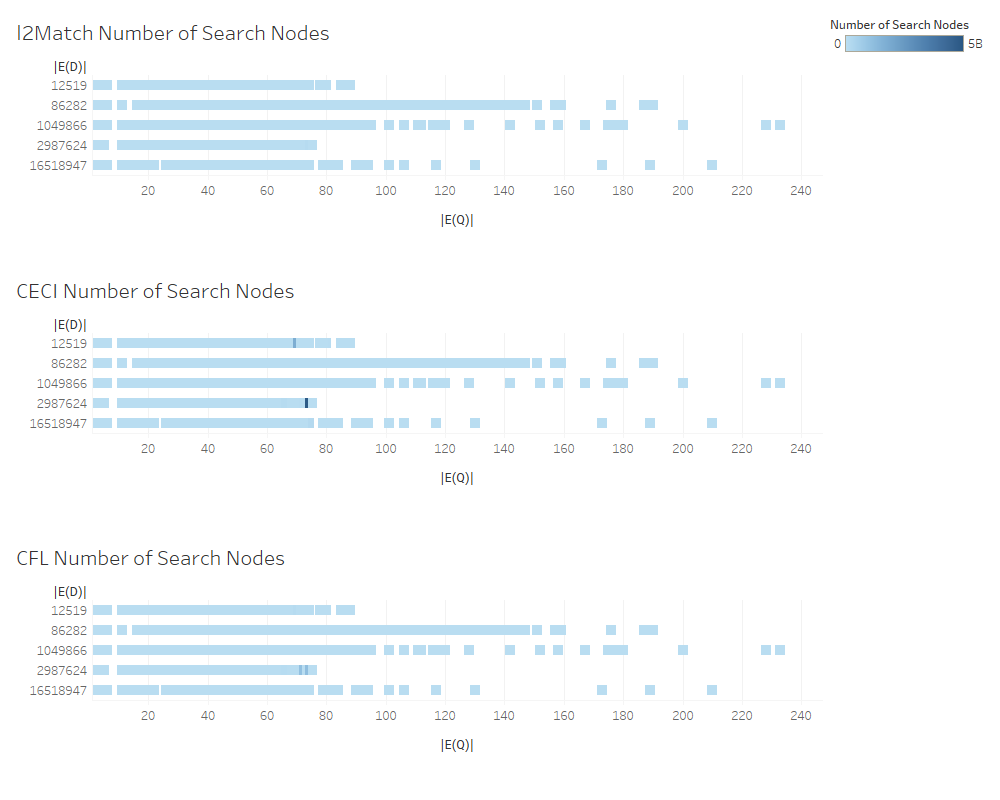}
    \end{subfigure}\quad
    \begin{subfigure}[t]{0.5\textwidth}
        \centering
        \includegraphics[width=\textwidth]{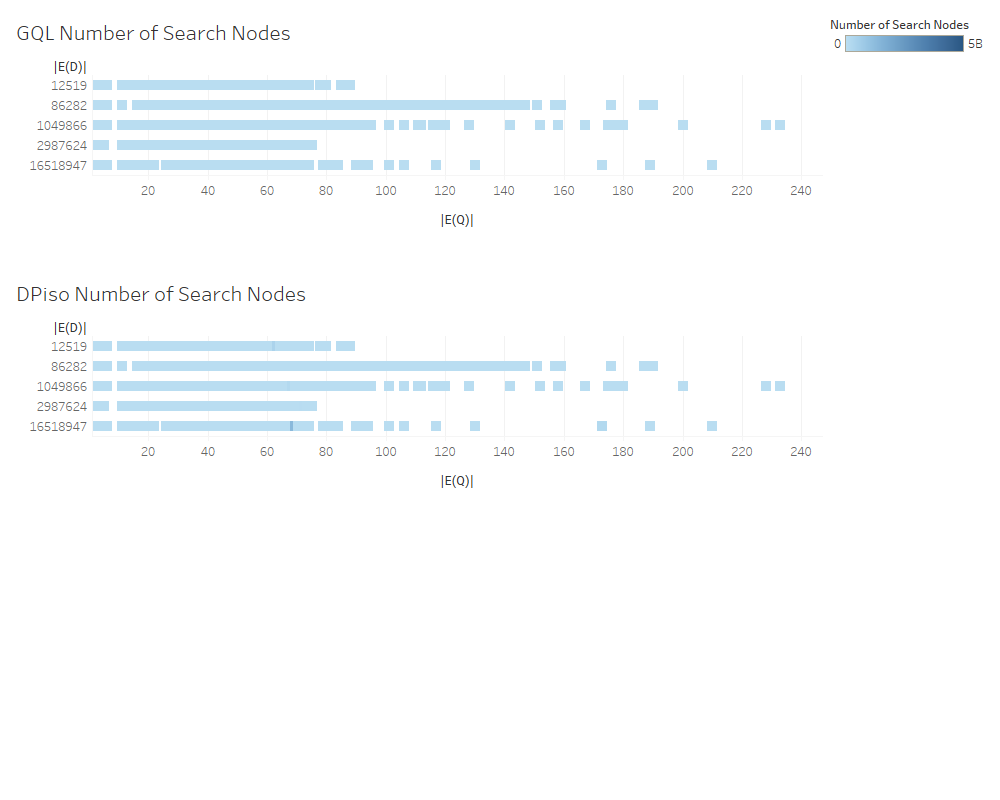}
    \end{subfigure}
    \caption{Number of query edge (x-axis), number of data edges (y-axis), and number of search nodes (color).}
    \label{fig.scatter_callcount}
\end{figure}

Figure \ref{fig.scatter_query} shows that the query time does not increase with the number of query edges. Most of the queries that took 5 seconds or more to solve are concentrated in the similar area for all algorithms. We set the color scale from 0 to 10 seconds, as any longer query time is considered slow in performance. 
In fact, most of the queries are solvable in less than 1 second. The l2Match algorithm demonstrates better performance in terms of query time as it solves the majority of the queries within 0 to 5 seconds. Conversely, GQL has the highest number of query instances solved in 10 seconds and above. 
l2Match and GQL outperform other algorithms in terms of the number of search nodes explored in the enumeration step as evidenced in Figure \ref{fig.scatter_callcount}. 
Furthermore, this proves that the JR optimization method is valid and effective in reducing search branches (the number of search nodes) and query time, as the l2Match algorithm employs an enumeration method that is identical to the CECI algorithm. Further details are discussed in \ref{subsec:JR-result}.

Thus, we classify the hardness of any query graph into four categories:
\begin{enumerate}
   \item Easy query (E), if all algorithm are able to solve it;
   \item Easy or Hard (EH), if more than one algorithms but not all are able to solve the query;
   \item Hard (H), if only exactly one algorithm is able to solve it;
   \item Unsolved (U), if none of the algorithms is able to solve the query within the time limit.
\end{enumerate}

The hardness of the query graphs is not to be confused with the satisfiability, as an unsolved query (U) may be satisfiable or unsatisfiable, but any other classes, such as E, EH, and H, are surely satisfiable.

\begin{table}[t]
    \caption{Satisfiability and Hardness of the Query Graphs.}
    \label{table:hardness}
    \centering
    \resizebox{0.5\textwidth}{!}{
        \begin{tabular}{|l|l r|r r r r|}
            \hline
                \multirow{2}{*}{Dataset} & \multicolumn{2}{c|}{satisfiability} & \multicolumn{4}{c|}{hardness} \\ \cline{2-7}
                & \multicolumn{1}{c}{satisfiable} & \multicolumn{1}{c|}{unsatisfiable} & Easy, E & Easy or Hard, EH & Hard, H & Unsolved, U \\
            \hline \hline
                \textbf{Human} & 900 & 0 & 855 & 42 & 1 & 2 \\
                
                \textbf{Yeast} & 900 & 0 & 892 & 8 & 0 & 0 \\
                
                \textbf{Patents} & 900 & 0 & 757 & 133 & 4 & 6 \\
                
                \textbf{Youtube} & 900 & 0 & 773 & 119 & 7 & 1 \\
                
                \textbf{DBLP} & 900 & 0 & 843 & 44 & 9 & 4 \\
    
                \textbf{ERP} & 164 & 36 & 0 & 0 & 1 & 199 \\
            \hline
        \end{tabular}
    }
\end{table}

\begin{table}[t]
    \caption{Number of Solved Queries.\tnote{1}}
    \label{table:numofsolved}
    \centering
    \resizebox{0.5\textwidth}{!}{
    \begin{threeparttable}
        \begin{tabular}{|l|l|l|l|l|l|l|l|l|l|l|} 
            \hline
                \multirow{3}{*}{Dataset} & \multicolumn{10}{c|}{Number of solved queries} \\ \cline{2-11}
                & \multicolumn{5}{c|}{EH} & \multicolumn{5}{c|}{H} \\ \cline{2-11}
                & l2Match & CECI & CFL & GQL & DPiso & l2Match & CECI & CFL & GQL & DPiso \\
            \hline \hline
                \textbf{Human}  & 27    & 18    & \cellcolor{blue!25}35    & \cellcolor{blue!25}35    & 27    & 0 & 0 & 0 & \cellcolor{blue!25}1 & 0 \\
                
                \textbf{Yeast}  & \cellcolor{blue!25}8 & 4 & 6 & 6 & \cellcolor{blue!25}8   & 0 & 0 & 0 & 0 & 0 \\
                
                \textbf{Patents} & 131   & 130   & \cellcolor{blue!25}132   & 6     & 123   & 0 & 0 & \cellcolor{blue!25}2 & 1 & 1 \\
                
                \textbf{Youtube} & 86    & 51   & 70   & \cellcolor{blue!25}106     & 93    & 0 & 0 & 0 & \cellcolor{blue!25}4 & 3 \\
                
                \textbf{DBLP}   & 34    & 33    & \cellcolor{blue!25}38    & 23    & 32    & 0 & 0 & 1 & \cellcolor{blue!25}7 & 1 \\
    
                \textbf{ERP}    & 0     & 0     & 0     & 0     & 0     & 0 & 0 & 0 & \cellcolor{blue!25}1 & 0 \\
            \hline
        \end{tabular}
        \begin{tablenotes}
            \item[1] This table presents the number of solved queries in terms of hardness of the query: Easy or Hard (EH), and Hard (H). The best results among the competing algorithms in each dataset are highlighted in blue.
        \end{tablenotes}
    \end{threeparttable}
    }
\end{table}

The satifiability and hardness of the query graphs are shown in Table \ref{table:hardness}. We further report the number of solved queries in Table \ref{table:numofsolved}. CFL solves more EH queries in comparison to the other algorithms. Since the enumeration step is the only part of the algorithm that is responsible for verifying the isomorphism of the subgraphs, an effective ordering method allows the enumeration function to prune invalid search branches as early as possible and reduce enumeration time. Clearly, this indicates that the ordering technique of CFL that utilizes the divide-and-conquer method is more effective than the ordering method of the others. Although GQL solves the majority of H queries by employing the DFS ordering method, it is evidenced that the DFS enumeration order slows down the enumeration process in comparison to CFL, which uses a similar enumeration method. Both l2Match and CECI employ the BFS ordering method to sort the query vertices. However, l2Match ranks second in the EH class and fourth in the H class despite being the fastest algorithm. 
In summary, the BFS ordering method is impractical and non-versatile to optimize the enumeration step since it cannot reduce unpromising intermediate results by utilizing the effect of Postponing Cartesian Product~\cite{CFL}.

Research proves that the Constraint Programming (CP) SI algorithm is much more adept at solving computationally challenging queries and estimating the satisfiability of any query \cite{satisfiable}. Nevertheless, results from \cite{hardness} show that the filter-and-verification algorithms are faster than the CP algorithms in terms of easy queries. 
Thus, the essential task is to prove that l2Match is more efficient at solving easy queries. 
Undoubtedly, the experimentation outcomes align with the research objectives and expectations.

\subsection{Average Query Time}
\label{subsec:query-result}

\begin{figure}
    \begin{subfigure}{\columnwidth}
        \centering
        \includegraphics[width=\linewidth]{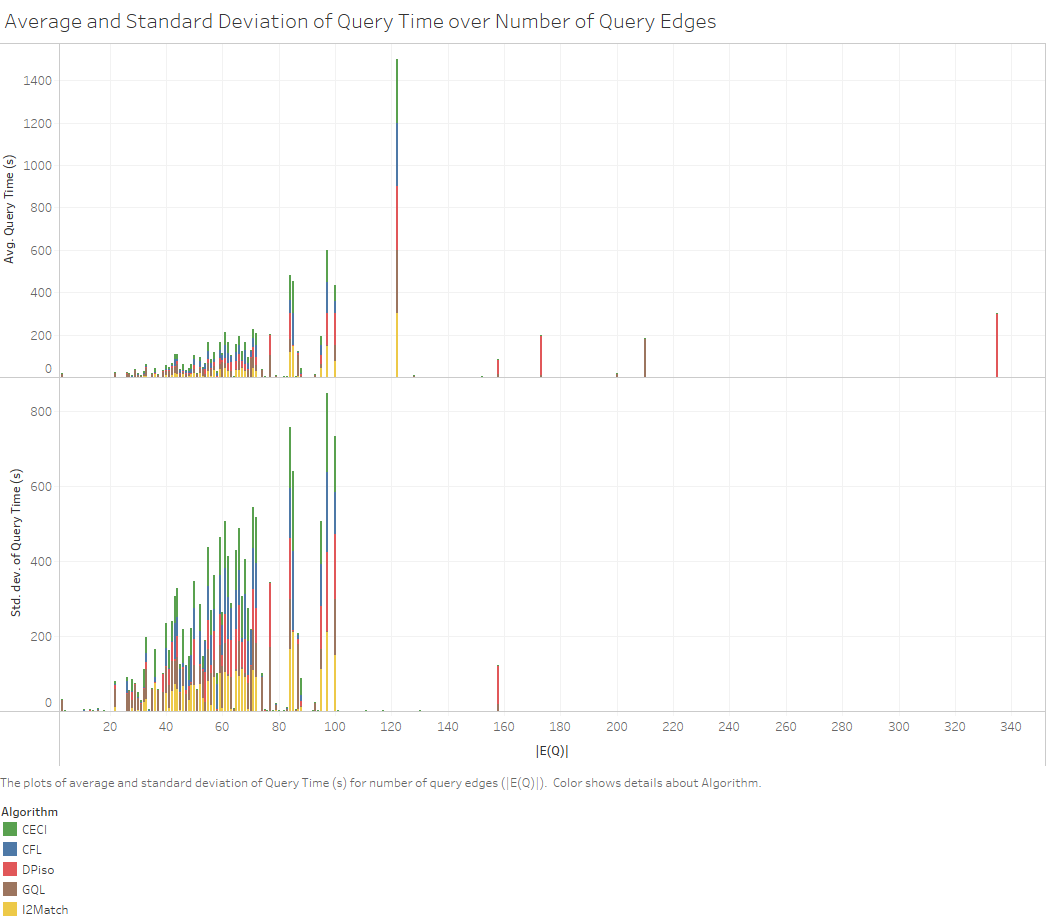}
        \caption{Average query time.}
    \end{subfigure}
    \begin{subfigure}{\columnwidth}
        \centering
        \includegraphics[width=\linewidth]{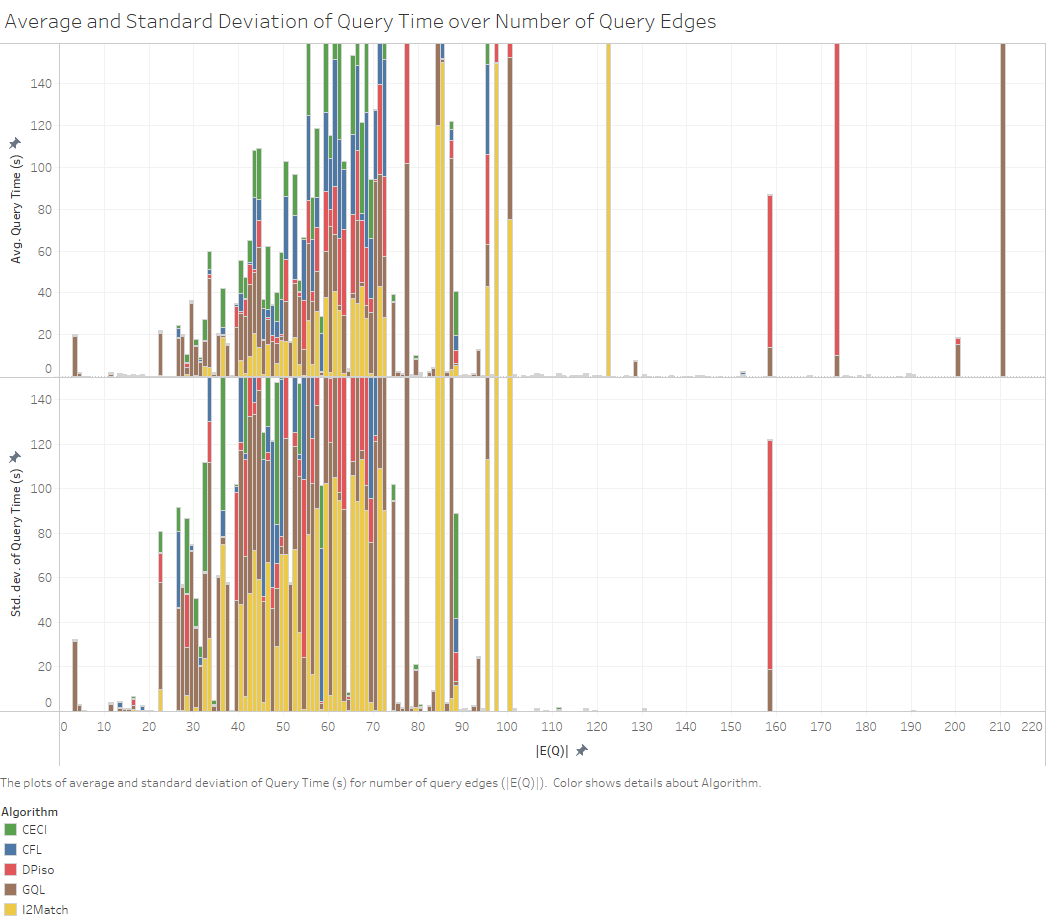}
        \caption{Standard deviation of query time.}
    \end{subfigure}
    \caption{Evaluation on the query time over the number of query edges with l2Match (yellow) as the fastest algorithm followed by CFL (blue), CECI (green), DPiso (red), and lastly GQL (brown).}
    \label{fig.performance-result}
\end{figure}

l2Match significantly outperforms other algorithms, especially on the lower end of the number of query edges, as reported in Figure \ref{fig.performance-result}. 
It achieves performance improvement\footnote{Performance improvement is calculated as the percentage of decrease \(p=100\cdot (t^{algo}-t^{l2Match})\div t^{algo}\) where \(t\) is the average query time and \(algo\) represents each competing algorithm.} over CECI by 19.22\%, CFL by 13.11\%, GQL by 45.21\%, and DPiso by 37.79\%. 
In contrast, the performance of GQL and DPiso are inferior as the number of query edges increases. 
Furthermore, the average and standard deviation of the results suggest that the l2Match and CFL algorithms scale almost linearly to the number of query edges until they reach a maximum value. 
The average degree and density of a query graph increase as the number of edges increases. 
Nonetheless, the number of candidates to verify and the enumeration time decrease as the search is focused on the denser region of the data graph. 
This observation supports the findings that the hardness of a query is peaked in the phase transition between the satisfiable and unsatisfiable areas~\cite{satisfiable}.

\subsection{Effectiveness of LPI, LPF, and BCPRefine Method in Filtering Step}
\label{subsec:l2-CECI}

\begin{figure}[htb]
    \centering
    \includegraphics[width=\columnwidth]{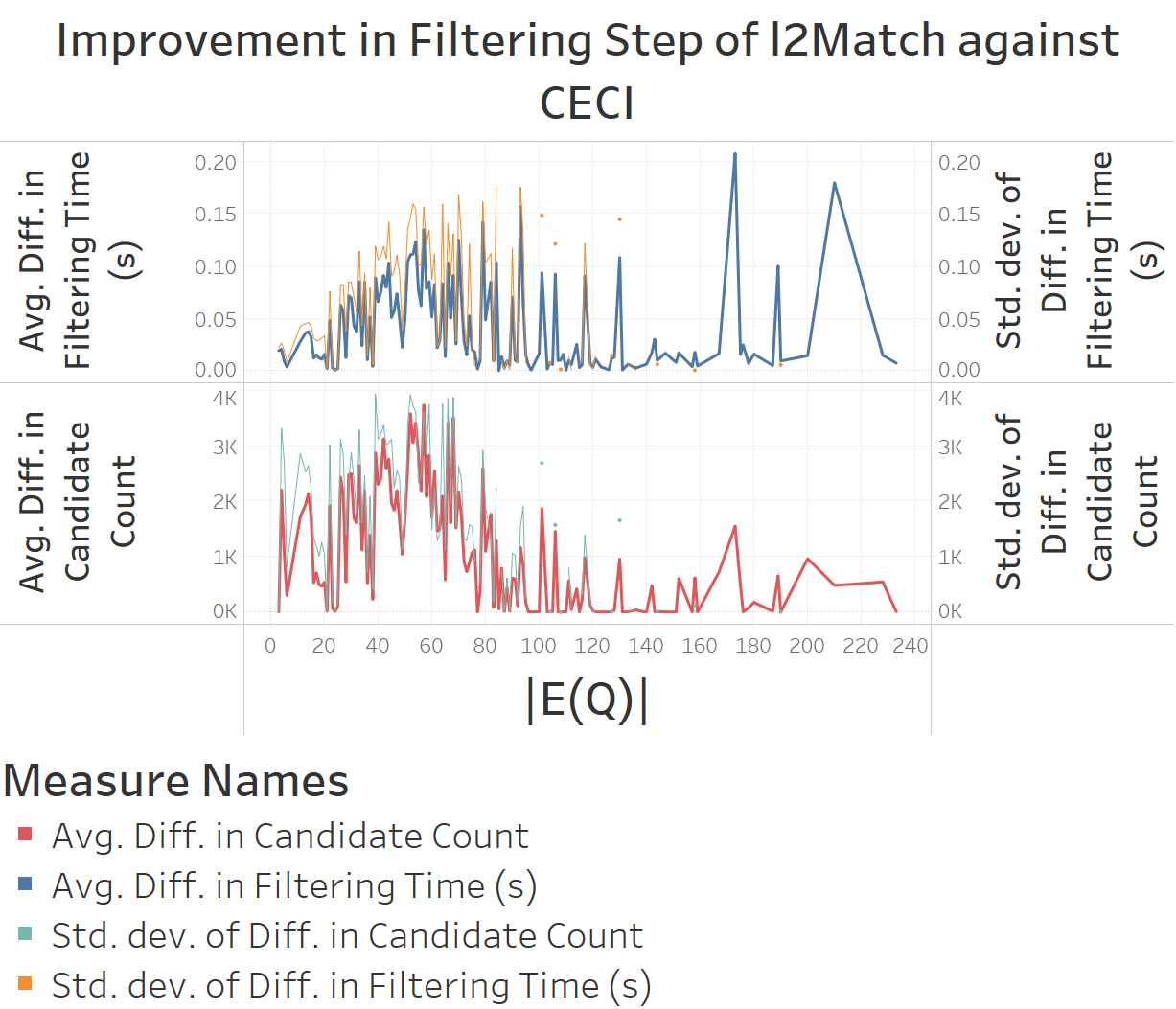}
    \caption{Improvement in filtering step of l2Match against CECI. Top results show the average (blue) and standard deviation (orange) difference in filtering time while bottom results show the average (red) and standard deviation (green) difference in candidate count.}
    \label{fig.l2-CECI}
\end{figure}

Figure \ref{fig.l2-CECI} shows the average performance difference between l2Match and CECI in terms of filtering time and candidate count. A positive difference signifies better performance of l2Match algorithm, and vice versa. 
It is apparent that l2Match outperforms CECI with the implementation of LPI, LPF, and BCPRefine methods that reduce the redundant traversals on the query graph and the expensive removal of invalid candidates. 
On average, l2Match has 14.39\% lesser candidate count and 39.85\% shorter filtering time than CECI\footnote{Average performance different is calculated as the percentage decrease \(\overline{p}=100(\Sigma_{i=1}^{N} (x_i^{CECI}-x_i^{l2Match})\div x_i^{CECI})\div N\) where \(x\) is the value of candidate count or filtering time, and \(N\) is the number of easy (E) queries.}.

\subsection{Improvement using LPI and LPF Method in Filtering Step}
\label{subsec:LPF-result}

\begin{figure}
    \begin{subfigure}[t]{\columnwidth}
        \centering
        \includegraphics[width=\textwidth]{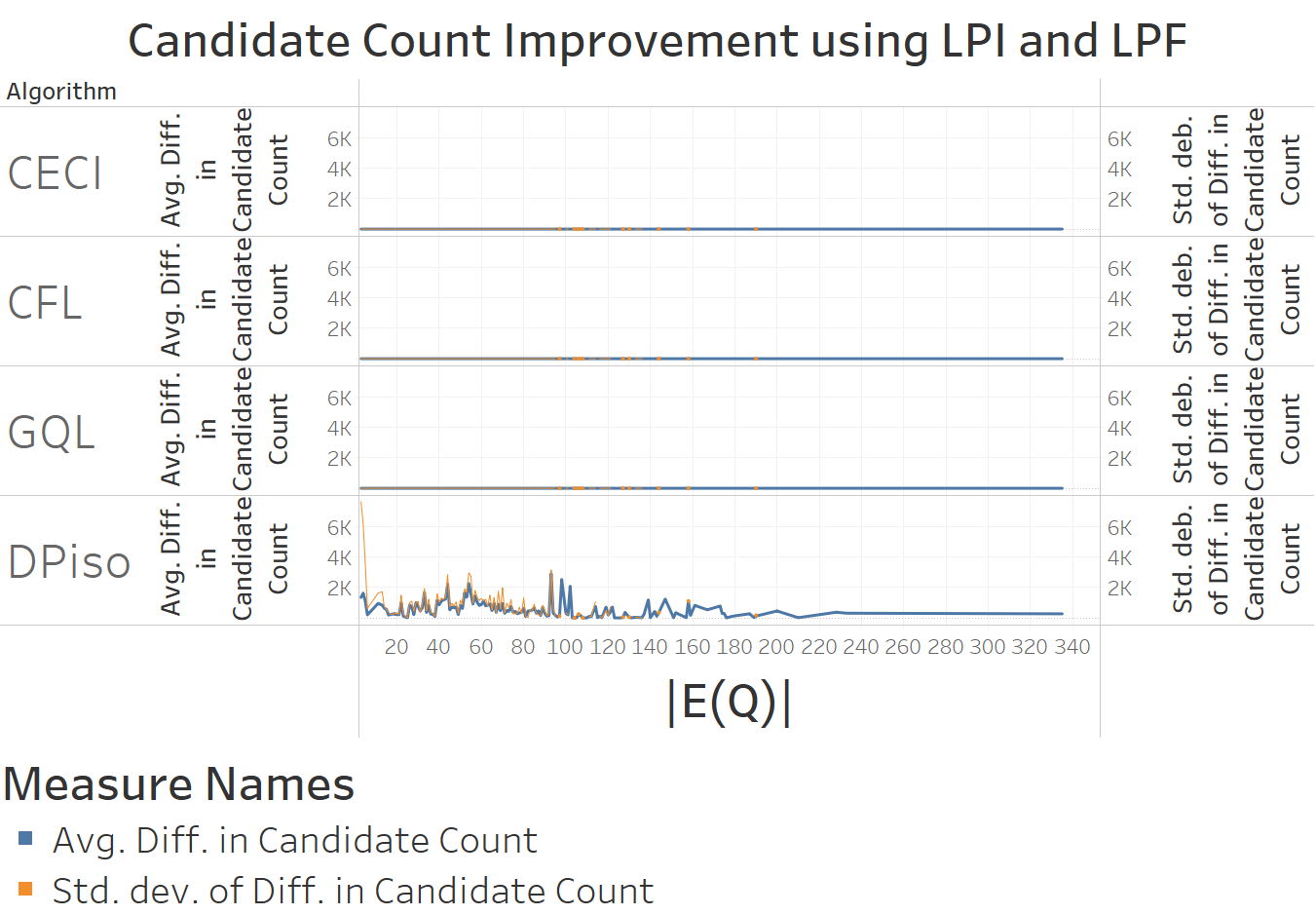}
        \caption{Candidate count.}
    \end{subfigure}\quad
    \begin{subfigure}[t]{\columnwidth}
        \centering
        \includegraphics[width=\textwidth]{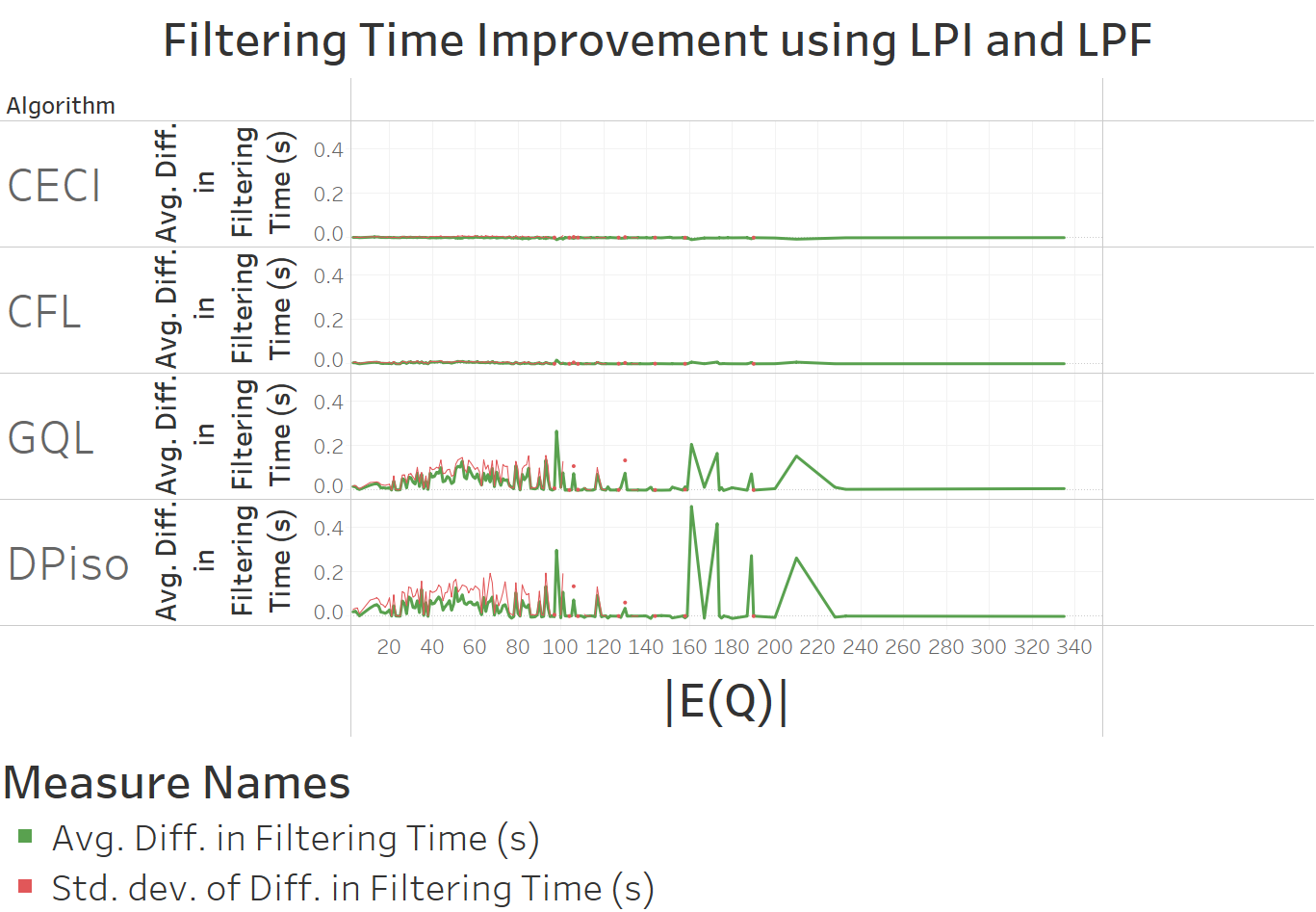}
        \caption{Filtering time.}
    \end{subfigure}
    \caption{Evaluation on the average (green, thick line) and standard deviation (red, thin line) of difference in the candidate count and the filtering time in the filtering step over the number of query edges for CECI, CFL, GQL, and DPiso algorithms (from top to bottom) optimized with LPI and LPF methods.}
    \label{fig.compareLPF}
\end{figure}

The effectiveness of the LPI and LPF methods are evaluated and portrayed in Figure~\ref{fig.compareLPF}. 
One positive trend (DPiso) is seen in the candidate count, and two strong positive trends (GQL and DPiso) are seen in the filtering time. 
The candidate count of the DPiso algorithm is decreased by 16.19\%, on averagely as a result of the optimization in filtering using the LPI and LPF methods\footnote{Average performance is calculated directly as the percentage decrease \(\overline{p}=100(\Sigma_{i=1}^{N} (x_i^{ORI}-x_i^{LPF})\div x_i^{ORI})\div N\) where \(x\) is the value of the query time or the number of search nodes, \(ORI\) and \(LPF\) represents the original and optimized algorithm respectively, and \(N\) is the number of easy (E) queries.}; 
however, it shows no changes in CECI, CFL, and GQL algorithms. It indicates that the pruning power of the LPF and NLF method are equivalent. Besides, LDF's pruning power is weaker than that of the LPF and NLF methods, as observed in the positive difference between DPiso-LPF and DP-iso.

LPI and LPF methods shorten the filtering time by 9.60\% (CFL), 20.75\% (GQL), and 16.94\% (DPiso), respectively, but increase the filtering time of CECI by 0.05\%. We speculate that the overhead of accessing a query vertex's neighbors stored in the LPI outweighs the time taken to scan every neighbor of a query vertex. This is true when the query vertex has a minimal number of neighbors. In spite of the shortcoming, LPI and LPF improve the filtering step by cutting down on the number of neighbors to access and verify, ultimately decreasing the filtering time.

\subsection{Improvement using JR Method in Enumeration Step (RO4)}
\label{subsec:JR-result}

\begin{figure}
    \begin{subfigure}[t]{\columnwidth}
        \centering
        \includegraphics[width=\linewidth]{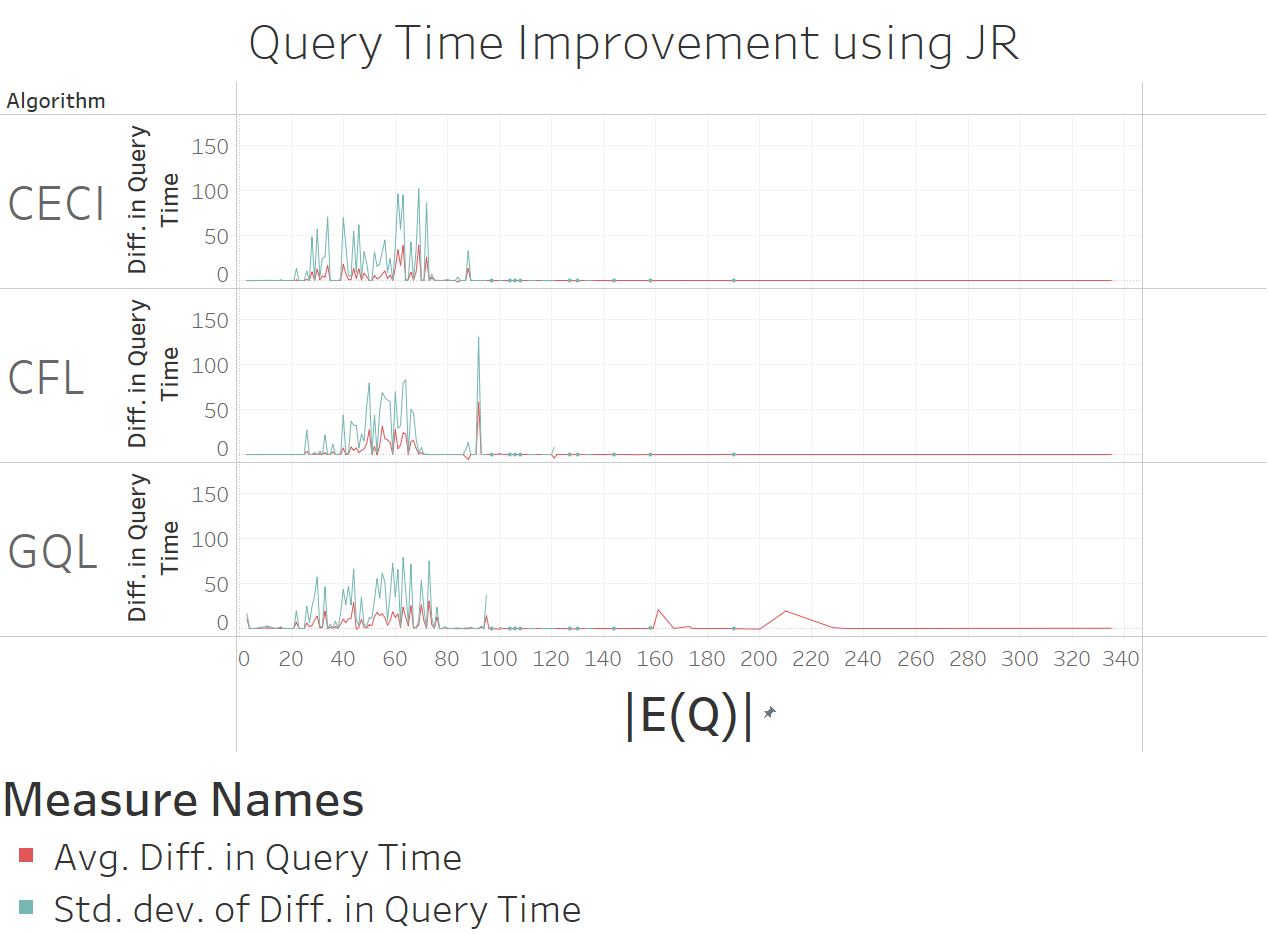}
        \caption{Query time.}
    \end{subfigure}\quad
    \begin{subfigure}[t]{\columnwidth}
        \centering
        \includegraphics[width=\linewidth]{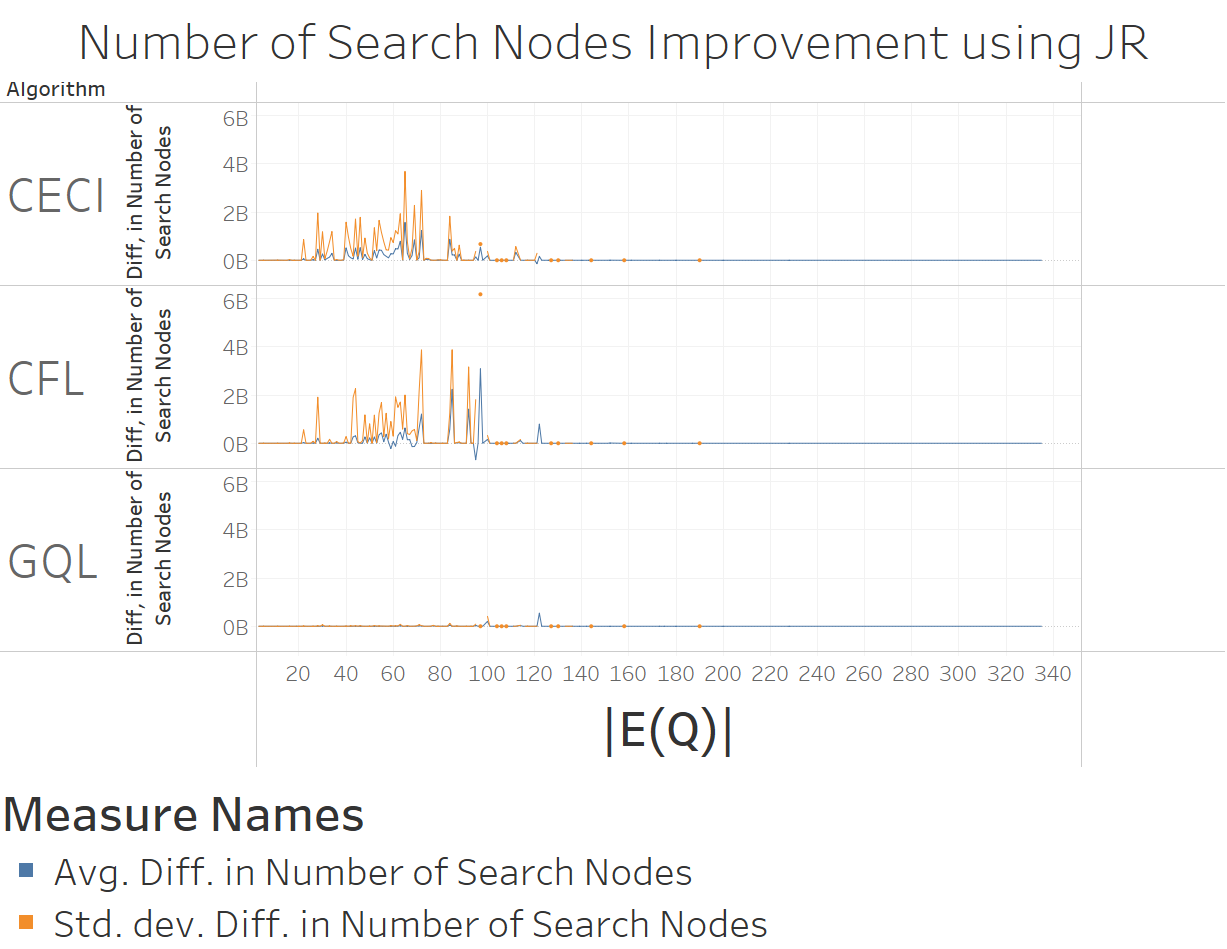}
        \caption{Number of search nodes.}
    \end{subfigure}
    \caption{Evaluation on the average (blue) and standard deviation (orange) of difference in the query time and the number of search nodes explored in the enumeration step over the number of query edges for CECI, CFL, and GQL algorithms (from top to bottom) optimized with JR method.}
    \label{fig.compareJR}
\end{figure}

The improvements in the enumeration step of CECI, CFL, and GQL algorithms achieved using the JR method are visualized in Figure~\ref{fig.compareJR}. 
DP-iso is not compatible with the JR optimization method, as JR only works on static enumeration order but not dynamic enumeration order. 
JR demonstrates positive improvements when paired with CECI algorithm. 
On the other hand, CFL-JR occasionally performs worse than the original in terms of query time when the number of query edges range between 80 and 90. 
Similarly, the query time of the original algorithm peaked in the same range in Figure~\ref{fig.performance-result} of Section~\ref{subsec:query-result}. 
Nonetheless, the performance of CFL-JR is mostly better than CFL. 
The difference in the number of search nodes is insignificant for GQL algorithm even though the query time of GQL-JR is, on average, shorter than GQL. 
The ordering method of GQL prioritizes neighboring vertices with minimum number of candidates. 
Consequently, the closest backward neighbor of each query vertex in the enumeration order is placed right before itself (distance of 1). 
Further inspections reveal that the JR method achieves better optimization in reducing exploration of redundant search branches when the spread and distance between the source and target query vertices of a jump is further.

The average improvement achieved with JR method in the enumeration step are 12.35\% (CECI), 5.88\% (CFL) and 41.62\% (GQL) in terms of the query time, and 46.47\% (CECI), 55.26\% (CFL) and 52.94\% (GQL) in terms of the number of search nodes\footnote{Average performance different calculated as the percentage decrease \(\overline{p}=100(\Sigma_{i=1}^{N} (x_i^{ORI}-x_i^{JR})\div x_i^{ORI})\div N\) where \(x\) is the value of the query time or the number of search nodes, \(ORI\) and \(JR\) represents the original and optimized algorithm respectively, and \(N\) is the number of easy (E) queries.}.

\section*{Acknowledgment}

This research was supported by Monash University Malaysia Merit Scholarship Scholarship. I cannot express enough thanks to my supervisors Assoc Prof Wong Kok Sheik and Dr Soon Lay Ki. Their tremendous guidance and advice are the keys to accomplish this research. A special thank to Julie Holden, who continuously assist me to structure and ameliorate my thesis. I would also like to acknowledge with much appreciation to the advice given by the Milestone Review Panel of Monash University. They inspire me to explore further in Subgraph Matching field and ensure that I am not swayed from my scope.

\printbibliography[title=References]

\end{document}